\documentclass[runningheads]{llncs}

\usepackage{amsmath,amsfonts}
\usepackage{mathrsfs}
\usepackage{tikz-cd}% diagrams
\usepackage{hyperref}
\usepackage{comment}

\usepackage{graphicx}
\newcommand{\F}{{\mathbb F}}

\newcommand{\fd}{\mathscr{F}in\mathscr{D}ist}

\newcommand{\cA}{\mathcal{A}}
\newcommand{\cB}{\mathcal{B}}
\newcommand{\cC}{\mathcal{C}}
\newcommand{\cD}{\mathcal{D}}

\newcommand{\cH}{\mathcal{H}}

\newcommand{\sG}{\mathscr{G}}
\newcommand{\sF}{\mathscr{F}}
\newcommand{\sH}{\mathscr{H}}

\newcommand{\sP}{\mathscr{P}}

\newcommand{\sS}{\mathscr{S}}

\newcommand{\sC}{\mathscr{C}}

\newcommand\correspondingauthor{\thanks{Corresponding author.}}

\begin{document}

\title{Bridges connecting Encryption Schemes}
%
%\titlerunning{Abbreviated paper title}
% If the paper title is too long for the running head, you can set
% an abbreviated paper title here
%

\author{Mugurel Barcau\inst{1,2} \and Cristian Lupa\c scu
\inst{1,3} \and
Vicen\c tiu Pa\c sol \inst{1,2} \and George C. \c Turca\c s \inst{1,4} \correspondingauthor}
\authorrunning{M. Barcau et al.}
% First names are abbreviated in the running head.
% If there are more than two authors, 'et al.' is used.
%
\institute{certSIGN -- Research and Innovation, Bucharest, Romania 
\and
Institute of Mathematics ``Simion Stoilow" of the Romanian Academy \and
Ferdinand I Military Technical Academy, Bucharest, Romania \and Babe\c s-Bolyai University, Cluj-Napoca, Romania
\email{\{alexandru.barcau,cristian.lupascu,vicentiu.pasol,george.turcas\}@certsign.ro}}

\maketitle              % typeset the header of the contribution
\begin{abstract}
The present work investigates a type of morphisms between encryption schemes, called bridges. By associating an encryption scheme to every such bridge, we define and examine their security. Inspired by the bootstrapping procedure used by Gentry to produce fully homomorphic encryption schemes, we exhibit a general recipe for the construction of bridges. Our main theorem asserts that the security of a bridge reduces to the security of the first encryption scheme together with a technical additional assumption.
\keywords{Encryption scheme \and Homomorphic encryption \and IND-CPA security}
\end{abstract}

\section{Introduction}
The idea of switching ciphertexts encrypted using the same scheme from one secret key to another appears in the literature under the name of Proxy Re-Encryption (see \cite{Do03} and the references within). More recently, a general method of converting ciphertexts from one encryption scheme to another was introduced in \cite{DN21} under the name of Universal Proxy Re-Encryption. In practice, Proxy Re-Encryption between two distinct (arbitrary) schemes is very difficult to realize, as the general methods proposed in \cite{DN21} make use of hard to achieve protocols such as indistinguishability obfuscation. In this work, we focus on unidirectional such protocols between two distinct encryption schemes and call them {\it bridges} (see Definition \ref{bridge}).

Bridges are important tools in the context of Hybrid Homomorphic Encryption (see for example \cite{Rasta18} and \cite{Pasta21}), where the owner encrypts its data using a symmetric cipher and sends the encryption to a server together with his symmetric key encrypted under a homomorphic encryption scheme. The server first homomorphically performs the decryption circuit of the symmetric cipher to transform the initial ciphertext into one that allows homomorphic computation and then proceeds with the desired computations. The result of this computation can only be decrypted by the data owner. Apart from other possible applications, there is another motivation for studying these primitives which comes from the perspective of viewing encryption schemes in a categorical context, where bridges play the role of morphisms in an appropriate category.  

 In his remarkable work on fully homomorphic encryption, C. Gentry \cite{Ge09T} used a {\it Recrypt} procedure in order to transform a somewhat homomorphic encryption scheme into a fully homomorphic encryption scheme. To be precise, Gentry's \textit{Recrypt} algorithm takes as input a ciphertext together with certain encryptions of the secret key under a different key and evaluates homomorphically the decryption algorithm in order to produce an encryption of the same plaintext under the new key. Under the definition we propose, the \textit{Recrypt} algorithm is a bridge from a somewhat homomorphic encryption scheme to itself. The recipe can be extended to produce a bridge from any encryption scheme to any somewhat homomorphic encryption scheme that can correctly evaluate the decryption circuit of the former.

Perhaps connected to the same idea is the work in \cite{CPP16}, where maps between two encryption schemes are used to construct a $2$-party computation protocol, called an Encryption Switching Protocol (ESP). The examples proposed in \cite{CPP16} and \cite{CIL17} consist of two encryptions schemes over the same plaintext, which has a structure of a ring, and switching protocols between them. One of the schemes is homomorphic with respect to addition and the other is homomorphic for the multiplication. An ESP of this form can be used to construct a secure general $2$-party computational protocol. 

Switching between one encryption scheme to another, in order to securely perform a sequence of homomorphic operations, is a recurrent theme in the literature. In this respect, it is important to formally define and analyze the security implications of such protocols, which represents the main goal of the present work. We shall call a map (or a morphism) between encryption schemes satisfying certain properties a {\it bridge}. The terminology is borrowed from  \cite{BGGJ}, where the expression ``bridge between encryption schemes" is briefly used in reference to a hybrid solution for switching between FHE schemes in order to optimize performance of  certain homomorphic computations on encrypted data. 

\medskip 

\noindent\textbf{Our contribution} \,  In this paper, we first propose a general definition for a bridge, formalizing the conditions under which an algorithm that publicly transforms encrypted data from one scheme to another should perform. We provide a general recipe, inspired by Gentry's idea, for the construction of bridges and then apply it to give various examples. This general recipe can be modified in various ways and we demonstrate this by presenting a variant of it. We also present an additional example of a bridge that does not fall in the category of Gentry type bridges. We canonically associate to any bridge an encryption scheme and then define the security of a bridge as being the security of its associated  encryption scheme. This association is widely used in mathematics when someone needs to replace a morphism between two objects by an object. More precisely, it consists in substituting a map by its graph, whenever this is possible. We prove a general theorem (Theorem \ref{mt}) asserting that the security of a bridge reduces to the security of the first encryption scheme together with a technical additional assumption. We show that the latter technicality is in fact a natural condition by proving that bridges obtained using Gentry's \textit{Recrypt} idea satisfy this assumption (Proposition \ref{prop:proof}).
The security analysis provided here is finer than the corresponding security analysis made on Proxy Re-Encryption schemes. Our definition of IND-CPA security of the bridge and Cohen's HRA security definition (see \cite{Coh19}) are equivalent (all players are honest) and thus Cohen's simulatability theorem (see Theorem 5 of \cite{Coh19}) is vacuous in the case discussed in this paper. Our work is accompanied by three appendices. In the first two, we present examples of bridges (of different types). 
Comments on the performance of the implementations of these examples are to be found in the last appendix.

\medskip

\noindent\textbf{Organization} \, The article is organized as follows. Section 2 consists of some mathematical background and preliminaries about encryption schemes used in the rest of the article. It starts by recalling some terminology and theoretical facts about finite distributions. In the same section, we also give the definition of a bridge. The contributions in section 3 regard the security of a bridge between two encryption schemes. The main result of our paper (Theorem \ref{mt}) is proved in this section. In section 4, we show that Gentry's \textit{Recrypt} algorithm gives a general recipe for the construction of bridges. Using the main result from the previous section, we prove that bridges generated using this recipe are secure. The appendices are organised as follows. By representing the decryption circuit of a specific encryption scheme in four different ways, we give in appendix A, four different examples of bridges from the same encryption scheme to various FHE schemes. A bridge with empty bridge key,  not following the recipe presented in section 4, connecting the GM and SYY encryption schemes is exhibited in appendix B. Its security follows from results in section 3. The homomorphic evaluation of a comparison circuit is presented as an application to the latter bridge. In the last appendix of this article, we report on the results of several experiments involving the implementation of the bridges introduced in the first two appendices.

\section{Preliminaries}

In all our definitions, we denote the security parameter by $\lambda$. We say that a function $\mu: \mathbb{N} \rightarrow [0, + \infty)$
is a negligible function if for any positive integer $c$ there exists 
a positive integer $N_c$, such that $\mu (n) < \dfrac{1}{n^c}$ for all $n \geq N_c$.

\subsection{Finite Distributions}
\label{finitedist}

A finite probability distribution is a probability distribution with finite support. If $X$ is a finite distribution, we denote by $|X|$ its support. If $X$ and $Y$ are finite distributions, then a morphism $\varphi:Y \rightarrow X$ is a map of sets (still denoted by) $\varphi: |Y| \rightarrow |X|$ such that  

\begin{equation*}
{\rm Pr} \{X=x\} = \sum_{y \in \varphi^{-1}(x)} {\rm Pr} \{Y=y\}.
\end{equation*}

\noindent for all $x \in |X|$. Notice that if $\varphi^{-1}(x)$ is empty then ${\rm Pr} \{X=x\} = 0$, which means that $\varphi$ is surjective onto $\{x \in |X| \mid {\rm Pr}\{X = x\} \neq 0\}$. The composition of two morphisms is a morphism and the identity map $1_{|X|}: |X| \rightarrow |X|$ gives rise to a morphism of distributions $1_X: X \rightarrow X$ so that the class of finite distributions together with all morphisms between them forms a category denoted $\fd$. As usual, two finite distributions are isomorphic if there exist a morphism between them that has an inverse. If $X$ is a finite distribution, then the slice category (cf. \cite{Awo}) $\fd_X$ of $X$-distributions consists of pairs $(Y, \varphi)$ where $Y$ is a finite distribution and $\varphi: Y \rightarrow X$ is a morphism of finite distributions. 
A morphism of $X$-distributions $f: (Y_1, \varphi_1) \rightarrow (Y_2, \varphi_2)$, consists of a morphism of finite distributions $f: Y_1 \rightarrow Y_2$ such that the following diagram

\begin{center} \begin{tikzcd}
Y_1  \arrow[rr, "f"] \arrow[rd, swap, "\varphi_1"]  & &   Y_2  \arrow[ld, "\varphi_2"] \\
&  X &
\end{tikzcd} \end{center}

\noindent is commutative.

If $x \in |X|$ with ${\rm Pr} \{X = x\} \neq 0$ and $(Y, \varphi)$ is an $X$-distribution then the {\it fiber of $Y$ over $x$} is the finite distribution $Y|_{X=x}$ with support $\varphi^{-1}(x)$ and ${\rm Pr}\{Y|_{X=x} = y\} = \dfrac{{\rm Pr}\{Y=y\}}{{\rm Pr}\{X=x\}}$, for all $y\in \varphi^{-1}(x)$.

If $(Y_1, \varphi_1)$ and $(Y_2, \varphi_2)$ are two $X$-distributions we construct the following product $Y_1 \times_X Y_2$. The support of this distribution is
\begin{equation*}
\vert Y_1 \times_X Y_2 \vert := \{(y_1, y_2)| y_1 \in |Y_1|, y_2 \in |Y_2| \; \text{such that} \; \varphi_1(y_1) = \varphi_2(y_2) \}.
\end{equation*}
If $x=\varphi_1(y_1)=\varphi_2(y_2)$ and ${\rm Pr}\{X =x \} \neq 0$, then

\begin{equation*}
{\rm Pr} \{Y_1 \times_X Y_2 = (y_1, y_2)\} := 
\frac{{\rm Pr} \{Y_1 = y_1\}  \cdot {\rm Pr} \{Y_2 = y_2\}}{{\rm Pr} \{X = x\}}.
\end{equation*}

\noindent Moreover, when ${\rm Pr} \{X = x\}=0$, then 
$$
{\rm Pr} \{Y_1 \times_X Y_2 = (y_1, y_2)\} :=0.
$$

\noindent Finally, the structural morphism of $\psi: \vert Y_1 \times_X Y_2 \vert \rightarrow X$ is
$\psi:= \varphi_1 \circ {\rm pr}_1 = \varphi_2 \circ {\rm pr}_2$, 
where ${\rm pr}_i : \vert Y_1 \times_X Y_2 \vert \rightarrow |Y_i|, i \in \{1, 2\}$ are the usual projections.

We remark that $\vert Y_1 \times_X Y_2 \vert$ is the usual fiber product in the category of sets, but $Y_1 \times_X Y_2$ is not a fiber product in the category $\fd$. However, the distribution $Y_1 \times_X Y_2$ is a product in the following sense. If one constructs the distribution of triples $(x, y_1, y_2)$: $x$ is chosen from $|X|$ according to $X$, $y_1$ and $y_2$ are chosen independently from $\varphi_1^{-1}(x)$ and $\varphi_2^{-1}(x)$
according to $Y_1$ and $Y_2$ respectively, then one obtains a distribution isomorphic to $Y_1 \times_X Y_2$.

Any finite distribution whose support is a one-point set is a final object in $\fd$. We shall denote by $Y_1 \times Y_2$ the product  $Y_1 \times_X Y_2$, where $X$ is any of the final objects of $\fd$. 

Notice that if $Y$ is an $X$-distribution, then the distribution $X \times_X Y$ is isomorphic to $Y$ as $X$-distributions (here we view $X$ as an $X$-distribution via the identity map). 
We will sometimes identify the distribution $X \times_X Y$ with $Y$ without mentioning it, if this is clear from the context. Morally, $X \times_X Y$ is the distribution $Y$ whose associated map $\varphi$ is known.

If $\{X_{\lambda}\}_{\lambda \in \mathbb{N}}$, $\{Y_{\lambda}\}_{\lambda \in \mathbb{N}}$ are ensembles of finite distributions then we define a morphism from the latter to the former as being a set of morphisms of finite distributions $\varphi_{\lambda}: Y_{\lambda} \rightarrow X_{\lambda}$ for all $\lambda$. One can verify immediately that ensembles of finite distributions together with morphisms form a category. If we fix an ensemble $\{X_{\lambda}\}_{\lambda}$, then we obtain the slice category of $\{X_{\lambda}\}_{\lambda}$-ensembles of finite distributions.
In this category we define, as before, the product of the two ensembles $\{Y_{\lambda}\}_{\lambda}$, $\{Z_{\lambda}\}_{\lambda}$ as 
$\{Y_{\lambda} \times_{X_{\lambda}} Z_{\lambda}\}_{\lambda}$.

The first part of the following statement is Definition 2 from \cite{Go90}.

\begin{definition} \label{efsampl} An ensemble $\{X_{\lambda} \}_{\lambda}$ of finite distributions is polynomial-time constructible if there exists a PPT algorithm $A$ such that $A(1^{\lambda})=X_{\lambda}$, for every $\lambda$. An $\{X_{\lambda} \}_{\lambda}$-ensemble of finite distributions $\{(Y_{\lambda}, \varphi_{\lambda}) \}_{\lambda}$ is polynomial-time constructible on fibers if there exist a PPT algorithm $A$, such that for any $x_{\lambda} \in |X_{\lambda}|$ we have $A(1^{\lambda}, x_{\lambda}) = Y_{\lambda}|_{X_{\lambda} = x_{\lambda}}$. 
\end{definition}

We will also use the following notion of computational (or polynomial) indistinguishability from \cite{GM82} and \cite{Go90}.
\begin{definition} Two ensembles of finite distributions $\{X_{\lambda} \}_{\lambda}$ and $\{Y_{\lambda} \}_{\lambda}$ are called computationally indistinguishable if for any PPT distinguisher $D$, the quantity $$\vert \mathrm{Pr}\left\{D(X_{\lambda})=1  \right\} - \mathrm{Pr}\left\{ D(Y_{\lambda})=1 \right\}\vert$$ is negligible as a function of $\lambda$.
\end{definition}

\noindent When referring to ensembles of finite distributions, we will leave out the subscript $\lambda$ if this is clear from the context.

\subsection{Encryption Schemes and Bridges}

A public key (or asymmetric) encryption scheme 
$$
\sS= (\mathrm{KeyGen}_{\sS}, \mathrm{Enc}_{\sS}, \mathrm{Dec}_{\sS})
$$

\noindent is a triple of PPT algorithms as follows:

\begin{itemize}
\item {\bf Key Generation.}  The algorithm $(sk, pk) \leftarrow \mathrm{KeyGen}_{\sS}(1^{\lambda})$ takes a unary representation of the
security parameter $\lambda$ and outputs a secret decryption key $sk$ and a public encryption key $pk$;
\item {\bf Encryption.} The algorithm $c \leftarrow  \mathrm{Enc}_{\sS}(pk, m)$ takes the public key $pk$ and a message $m\in \mathscr{P}$
and outputs a ciphertext $c\in\mathscr{C}$;
\item {\bf Decryption.} The algorithm $m^{\star} \leftarrow \mathrm{Dec}_{\sS}(sk, c)$ takes the secret key $sk$ and a ciphertext $c \in \mathscr{C}$ and outputs
a message $m^{\star} \in \mathscr{P}$;
\end{itemize}
where the finite sets $\mathscr{P}$ and $\mathscr{C}$ represent the plaintext space, respectively the ciphertext space.
The algorithms above must satisfy the correctness property
$${\rm Pr}\left\{\mathrm{Dec}_{\sS}(sk, \mathrm{Enc}_{\sS}(pk, m)) = m \right\} = 1- {\rm negl}(\lambda),$$

\noindent where the probability is taken over the experiment of running the key generation and encryption algorithms and choosing uniformly $m \leftarrow \mathscr{P}$.

A private key (or symmetric) encryption scheme is a public key encryption scheme for which the public and secret keys are equal. 

We say that an instance $pk$ of the public key, or an instance $sk$ of the secret key, is of {\it level} $\lambda_0$ if it is outputted by the key generation algorithm whose input is the unary representation of $\lambda_0$.

\begin{remark}
In the language of ensembles of finite distributions, the public keys of an encryption scheme form an $SK$-ensemble of finite distributions, where $SK$ is the ensemble of secret keys. Moreover, an encryption scheme is just a collection of $PK$-ensembles of finite distributions indexed by the plaintext space that are polynomial-time constructible on fibers (here $PK$ is the ensemble of public keys). 
\end{remark}

 A homomorphic (public-key) encryption scheme 
$$
\sH = (\mathrm{KeyGen}_{\sH}, \mathrm{Enc}_{\sH}, \mathrm{Dec}_{\sH}, \mathrm{Eval}_{\sH})
$$

\noindent is a quadruple of PPT algorithms such that $(\mathrm{KeyGen}_{\sH}, \mathrm{Enc}_{\sH}, \mathrm{Dec}_{\sH})$ is a public-key encryption scheme and the $\mathrm{KeyGen}_{\sH}$ algorithm also outputs an additional evaluation key $evk$ besides $sk$ and $pk$, where the {\bf Homomorphic Evaluation} algorithm 
$\mathrm{Eval}_{\sH}$ takes the evaluation key $evk$,
a circuit $f: \mathscr{P}^{\ell} \rightarrow \mathscr{P}$ and a set of $\ell$ ciphertexts $c_1, ..., c_{\ell} \in \mathscr{C}$, and outputs a ciphertext $c_f$.

We say that a homomorphic encryption scheme $\sH$ is {\it $\mathcal{C}$-homomorphic} for a class of functions $\mathcal{C}=\{\mathcal{C}_{\lambda}\}_{\lambda \in \mathbb{N}}$,
if for any sequence of functions $f_{\lambda} \in \mathcal{C}_{\lambda}$ and respective inputs $\mu_1,...,\mu_{\ell} \in \mathscr{P}$ (where $\ell = \ell(\lambda))$,
it holds that
\begin{equation*}
    \text{Pr} [\mathrm{Dec}_{\sH}(sk,\mathrm{Eval}_{\sH}(evk, f_{\lambda}, c_1, ..., c_{\ell})) \neq f_{\lambda}(\mu_1, ..., \mu_{\ell})] = \text{negl}(\lambda),
\end{equation*}

\noindent where $(pk, sk, evk) \leftarrow \mathrm{KeyGen}_{\sH}(1^{\lambda})$ and $c_i \leftarrow \mathrm{Enc}_{\sH}(pk, \mu_i)$ for all $i$.

 In addition, a homomorphic encryption scheme $\sH$ is {\it compact} if there exist a polynomial $s = s(\lambda)$ such that the output length of $\mathrm{Eval}_{\sH}$ is at most $s$ bits long, regardless of $f$ or the number of inputs.

An encryption scheme is called {\it fully homomorphic (FHE)} if it is homomorphic for the class of all boolean functions and it satisfies the compactness condition.

We now give the definition of a bridge:
\begin{definition}\label{bridge}
Let $\mathscr{S}_j = (\mathscr{P}_j, \mathscr{C}_j, {\rm KeyGen}_j, {\rm Enc}_j, {\rm Dec}_j)$, $j \in \{1, 2\}$ be two encryption schemes. A bridge $\mathbf{B}_{\iota, f}$  from $\mathscr{S}_1$ to $\mathscr{S}_2$ consists of:
\begin{enumerate}
    \item An injective function $\iota: \mathscr{P}_1\to \mathscr{P}_2$ such that:
        \begin{enumerate}
            \item $\iota$ is computable by a deterministic polynomial time algorithm;
            \item there exists a deterministic polynomial time algorithm which computes $\iota^{-1}: \mathscr{P}_2 \to \mathscr{P}_1$, i.e. outputs the symbol $\perp$ if the input is not in the image of $\iota$ and the preimage of the input otherwise,
        \end{enumerate}
    \item A PPT {\it bridge key generation} algorithm, which has the following three stages. First, the algorithm gets the security parameter $\lambda$ and uses it to run the key generation algorithm of $\mathscr{S}_1$ in order to obtain a pair of keys $sk_1, pk_1$. In the second stage the algorithm uses $sk_1$ to find a secret key $sk_2$ of level $\lambda $ for $\mathscr{S}_2$, and then calls the key generation algorithm of $\sS_2$ to produce $pk_2$. In the final stage, the algorithm takes as input the quadruple $(sk_1, pk_1, sk_2, pk_2)$ and outputs a bridge key $bk$.
    
    \item  A PPT algorithm $f$ which takes as input the bridge key $bk$ and a ciphertext $c_1\in \mathscr{C}_1$ and outputs a ciphertext $c_2\in\mathscr{C}_2$, 
\end{enumerate}

\noindent such that 

$$
\text{Pr}\left\{{\rm Dec}_2(sk_2, f(bk, {\rm Enc}_1(pk_1, m)))=\iota(m) \right\}= 1- {\rm negl}(\lambda),
$$

\noindent where the probability is taken over the experiment of running the key generation and encryption algorithms and choosing uniformly $m \leftarrow \mathscr{P}_1$.
\end{definition}

Notice that the definition above includes the case in which any of the two schemes is symmetric. Also, the plaintext spaces are fixed, i.e. they do not depend on the security parameter $\lambda$. One can define a bridge between encryption schemes for which the plaintext spaces do depend on $\lambda$, as in the case of RSA or Paillier cryptosystems. However, in this article we are considering only the former situation.

\begin{remark}
The bridge key generation algorithm does not necessarily output all possible pairs $(sk_1, sk_2)$. Even though any secret key $sk_1$ of the scheme $\mathscr{S}_1$ may be outputted by the key generation algorithm of the bridge, only few $sk_2$'s may occur. The bridge key generation algorithm produces the following $\{SK_{1, \lambda}\}_{\lambda}$-ensembles of finite distributions $\{SK_{2, \lambda} \}_{\lambda}$, $\{PK_{i, \lambda} \}_{\lambda}, i \in \{1, 2\}$, and $\{BK_{\lambda} \}_{\lambda}$. The morphisms between these ensembles of finite distributions are illustrated in Figure \ref{bk:diagram}.

\begin{figure} [ht]
\centering
\begin{tikzcd}
 & BK \arrow[d] & \\
 & PK_1 \times_{SK_1} PK_2 \arrow[ld] \arrow[rd] & \\
 PK_1 \arrow[d] & & PK_2 \arrow[d] \\
 SK_1 &  & SK_2 \arrow[ll]
\end{tikzcd}
\caption{Probability distributions for bridges}
\label{bk:diagram}
\end{figure}
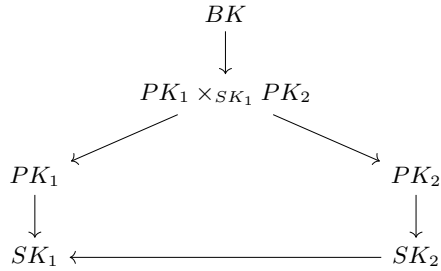
\end{remark}

\noindent We mentioned earlier the idea of thinking of a bridge as a (category theoretical) morphism between encryption schemes. Although we do not claim to have defined a category, from this point of view, it is natural to address the existence of identity morphisms. We briefly explain below that the identity map between one encryption scheme to itself is a bridge.

\medskip

\begin{example} If $\sS$ is an encryption scheme, then the identity map $\mathscr{C} \to \mathscr{C}$ gives rise to a bridge. The bridge key generation algorithm generates a unique secret key $sk$ and two (independently generated) public keys $pk_1, pk_2$ corresponding to this secret key. The algorithm outputs $(sk,pk_1, sk, pk_2, \mathrm{NIL})$. We emphasize that the bridge key and the choices of $pk_1$ and $pk_2$ do not play any role in the evaluation of the bridge map. 
\end{example}

\section{The security of a bridge}

The aim of this section is to define and investigate the IND-CPA security of a bridge. We start by defining and extending the notion of IND-CPA security of a scheme and then we move to the discussion concerning the security of a bridge.

\begin{definition}[IND-CPA Security] \label{IND-CPA} Let $\sS= (\mathrm{KeyGen}_{\sS}, \mathrm{Enc}_{\sS}, \mathrm{Dec}_{\sS})$ be a public key encryption scheme. We define an experiment $\mathrm{Exp}_b[\cA]$ parameterized by a bit $b \in \{0,1 \}$ and an efficient (PPT) adversary $\cA$:

\begin{equation*}
\begin{split}
\mathrm{Exp}_b[\cA](1^{\lambda}): & \text{1. } (pk, sk) \longleftarrow \mathrm{KeyGen}_{\sS}(1^{\lambda}) \\
& \text{2. } (m_0,m_1) \longleftarrow \cA(1^{\lambda}, pk) \\
& \text{3. } \mathrm{ct} \longleftarrow \mathrm{Enc}_{\sS}(pk, x_b) \\
& \text{4. } b' \longleftarrow \cA(\mathrm{ct}) \\
& \text{5. } \mathrm{return}(b')
\end{split}
\end{equation*}

\noindent The advantage of adversary $\cA$ against the IND-CPA security of the scheme is
$$
\mathrm{Adv}^{\text{IND-CPA}}[\cA](\lambda) := \vert \mathrm{Pr} \left\{\mathrm{Exp}_0[\cA](1^{\lambda})=1 \right\} - \mathrm{Pr}\left\{\mathrm{Exp}_1[\cA](1^{\lambda}) = 1 \right\} \vert, 
$$

\noindent where the probability is over the randomness of $\cA$ and of the experiment.
We say that the scheme is IND-CPA secure if for any efficient  adversary $\cA$, the advantage $\mathrm{Adv}^{\text{IND-CPA}}[\cA]$ is negligible as a function of $\lambda$. In the case of a symmetric encryption scheme, the adversary $\cA$ is given access to an encryption oracle.  
\end{definition}

\begin{remark} 
As in the previous definition, when considering the security of a private encryption scheme, it is standard to replace the public key by an encryption oracle. From this point of view, a symmetric encryption scheme is a public encryption scheme whose public key consists of the access to an encryption oracle. Although we will give security definitions and proofs for public key encryption schemes, unless otherwise specified, these can be extended to the symmetric key setting using the above paradigm. 
\end{remark}

Let $\mathscr{S}$ be an encryption scheme and let $K$ be some data outputted by an oracle whose input is the triple $(1^{\lambda}, sk_{\mathscr{S}},  pk_{\mathscr{S}})$. We shall denote by $\mathscr{S}[K]$ the encryption scheme whose public key is the pair $(pk_{\mathscr{S}}, K)$, and the encryption and decryption algorithms are exactly as in $\sS$. The only difference between the schemes $\sS$ and $\sS[K]$ is related to their security. More precisely, an adversary attacking the scheme $\mathscr{S}[K]$ has more information than an adversary attacking $\sS$. We say that an adversary $\cA$ attacking
$\mathscr{S}[K]$ is an adversary attacking {\it $\sS$ with knowledge $K$}.
For example, $K$ can be a set consisting of $\sS$-encryptions of the bit representation of the secret key, as used in \cite{Ge10} for the bootstrapping procedure. It is commonly assumed that such $K$'s do not affect the security of the encryption scheme, assumption called {\it circular security}. The following definition aims to generalize the {\it circular security} assumption for some general data $K$.

\begin{definition}
We say that some knowledge $K$ is negligible for an encryption scheme $\sS$
if for any adversary $\cA$ attacking $\sS[K]$ there exists an adversary $\cA'$ attacking $\sS$ such that 
$$
\vert \mathrm{Adv}^{\text{IND-CPA}}[\cA](\lambda) - \mathrm{Adv}^{\text{IND-CPA}}[\cA'](\lambda)\vert 
$$
is negligible as a function of $\lambda$.
\end{definition}

Notice that any adversary attacking $\sS$ gives rise, in the obvious way, to an adversary attacking $\sS[K]$, so that if $K$ is negligible for $\sS$ then the  IND-CPA security of $\sS$ is equivalent to the IND-CPA security of $\sS[K]$.

In order to define the IND-CPA security of a bridge, we shall associate to it, in a canonical way, an encryption scheme; the security of the bridge will be, by definition, the security of the associated encryption scheme. Let  $\mathbf{B}_{\iota, f}$ be a bridge, then the associated encryption scheme
$$\mathscr{G}_f = (\mathscr{P}_{\sG_f}, \mathscr{C}_{\sG_f}, \mathrm{KeyGen}_{\sG_f}, {\rm Enc}_{\sG_f}, {\rm Dec}_{\sG_f})$$ 
is defined as follows. The plaintext space is $\mathscr{P}_{\sG_f} =\mathscr{P}_1$, and the ciphertext space is $\mathscr{C}_{\sG_f} = \mathscr{C}_1 \times \mathscr{C}_2$. The algorithm ${\rm KeyGen}_{\mathscr{G}_f}$ uses the key generation algorithm of the bridge
to get $sk_1, pk_1, sk_2, pk_2, bk$. The secret key $sk_{\mathscr{G}_f}$ is the pair $(sk_1, sk_2)$, and the public key $pk_{\sG_f}$ is 
$(pk_1, pk_2, bk)$.

For any $m \in \mathscr{P}_{\sG_f}$, its encryption is defined by:
$$
  {\rm Enc}_{\sG_f} (pk_{\sG_f}, m) := \left( a,  f(bk, b) \right),
$$

\noindent where $a, b \leftarrow \mathrm{Enc}_1(pk_1, m)$.
Finally, the decryption of a ciphertext $c_{\sG_f} = (c_1, c_2) \in \mathscr{C}_1 \times \mathscr{C}_2$ is obtained using the formula:

$$
{\rm Dec}_{\sG_f} (sk_{\sG_f}, c_{\sG_f}) := {\rm Dec}_1 (sk_1, c_1).
$$

\noindent We notice that the decryption of $\mathscr{G}_f$ satisfies 

$$
{\rm Dec}_{\sG_f}\Big(sk_{\sG_f}, (a, f(bk,b)) \Big) = \iota^{-1} \Big({\rm Dec}_2(sk_2, f(bk,b)) \Big),
$$

\noindent for any $(a, f(bk,b)) \leftarrow \mathrm{Enc}_{\sG_f}(pk_{\sG_f},m)$ with overwhelming probability, due to the third condition in the definition of a bridge. One can immediately verify that the correctness of the encryption scheme $\mathscr{G}_f$ follows from the correctness of $\sS_1$.

\begin{remark}
The notation and construction are inspired by the construction of the graph of a function.
\end{remark}

Now we define the IND-CPA security of a bridge.

\begin{definition} The IND-CPA security of the bridge $\mathbf{B}_{\iota, f}$ is the IND-CPA security of its
associated encryption scheme $\mathscr{G}_f$ .
\end{definition}

\noindent We have the following immediate result.

\begin{proposition}\label{S1CPA}
If a bridge $\mathbf{B}_{\iota, f}$ is IND-CPA secure, then the encryption scheme $\mathscr{S}_1$ is also IND-CPA secure. 
\end{proposition}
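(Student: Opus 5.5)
We argue by contraposition and use a direct reduction: any PPT adversary $\cA$ against the IND-CPA security of $\sS_1$ is turned into a PPT adversary $\cB$ against $\sG_f$ with exactly the same advantage. The adversary $\cB$ simply discards the extra components that $\sG_f$ gives it.

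On input $(1^{\lambda}, pk_{\sG_f})$, where $pk_{\sG_f} = (pk_1, pk_2, bk)$, the adversary $\cB$ extracts $pk_1$ and runs $\cA(1^{\lambda}, pk_1)$. It forwards the challenge pair $(m_0, m_1)$ output by $\cA$. When it receives a challenge ciphertext $\mathrm{ct} = (c_1, c_2)$ of $\sG_f$, it passes $c_1$ to $\cA$ and outputs whatever bit $\cA$ returns. Clearly $\cB$ is PPT, since it only performs projections on top of running $\cA$.

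The main step is to check that $\cB$'s simulation of $\cA$'s view is perfect.

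\textbf{The public key.} The bridge key generation algorithm obtains $(sk_1, pk_1)$ by running ${\rm KeyGen}_1(1^{\lambda})$ in its first stage. Hence the marginal distribution of $pk_1$ inside $pk_{\sG_f}$ is exactly $PK_1$, the distribution $\cA$ sees in the real experiment for $\sS_1$. In the notation of Figure~\ref{bk:diagram}, this is the push-forward of $BK$ along the morphism $BK \to PK_1 \times_{SK_1} PK_2 \to PK_1$.

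\textbf{The challenge.} By definition, ${\rm Enc}_{\sG_f}(pk_{\sG_f}, m_b) = (a, f(bk, b'))$ with $a \leftarrow {\rm Enc}_1(pk_1, m_b)$. Given $pk_1$, the first component $a$ is therefore distributed exactly as an $\sS_1$-encryption of $m_b$. The coins that produce $a$ are independent of everything else apart from $pk_1$.

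Consequently, for each $b \in \{0,1\}$, $\Pr\{\mathrm{Exp}_b[\cB]=1\} = \Pr\{\mathrm{Exp}_b[\cA]=1\}$. It follows that $\mathrm{Adv}^{\text{IND-CPA}}[\cB] = \mathrm{Adv}^{\text{IND-CPA}}[\cA]$. If $\sG_f$ is IND-CPA secure, this advantage is negligible, so $\sS_1$ is IND-CPA secure.

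The only point that needs care is the distributional claim about $pk_1$: we must confirm from Definition~\ref{bridge} that the bridge key generation does not bias $pk_1$. It does not, since $pk_1$ is produced by ${\rm KeyGen}_1$ before $sk_2$, $pk_2$ and $bk$ are sampled.

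In the symmetric case the same argument applies with oracles in place of public keys. Here $\cB$ answers $\cA$'s encryption-oracle queries by querying the $\sG_f$ oracle and returning the first component of each answer. Each such component is distributed exactly as an output of ${\rm Enc}_1(pk_1, \cdot)$.
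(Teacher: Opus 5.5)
Your proposal is correct and is essentially the paper's own argument: the paper builds an adversary against $\sG_f$ that passes $pk_1$ and the first ciphertext component $a$ to the $\sS_1$-adversary and outputs its bit, so the two advantages are equal. Your explicit check that $pk_1$ has the same distribution as under ${\rm KeyGen}_1$, and your remark on the symmetric case, fill in details the paper leaves implicit.
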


\begin{proof}
Indeed, we can associate to any adversary $\cA_1$ which is trying to break the IND-CPA security of $\mathscr{S}_1$, an adversary $\cA_f$ for the encryption scheme $\mathscr{G}_f$, as follows. For any pair $(a, f(bk,b))$ proposed by the challenger to $\cA_f$, where $a,b \leftarrow \mathrm{Enc}_1(m)$, the attacker $\cA_f$ sends the triple $(\lambda,pk_1, a)$ to $\cA_1$ and returns the output of $\cA_1(\lambda,pk_1,a)$.

It is clear that 

\[
\mathrm{Adv}^{\text{IND-CPA}}[\cA_f](\lambda) = \mathrm{Adv}^{\text{IND-CPA}}[\cA_1](\lambda),
\]

\noindent and the result follows.
\end{proof}

\medskip

In the next theorem, the encryption scheme $\mathscr{S}_1[PK_{\sG_f}]$ is the scheme $\mathscr{S}_1$ with knowledge $PK_{\sG_f}$. Namely, after running the key generation algorithm of $\mathscr{S}_1$ and receiving the pair 
$(sk_1, pk_1)$, the challenger has access to an oracle that runs the second part of the key generation algorithm of the bridge to get $sk_2, pk_2, bk$. Thus, an IND-CPA attacker on this scheme will receive $pk_1, pk_2, bk$.

\begin{theorem} \label{knowledge:thm} The encryption scheme $\mathscr{S}_1[PK_{\sG_f}]$ is IND-CPA secure if and only if $\sG_f$ is IND-CPA secure.
\end{theorem}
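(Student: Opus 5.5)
Both implications go through direct reductions that preserve the advantage exactly. In each, the public key seen by the adversary has the same distribution: in both $\sS_1[PK_{\sG_f}]$ and $\sG_f$ the attacker receives a triple $(pk_1,pk_2,bk)$ produced by the bridge key generation algorithm. The only difference is the challenge ciphertext: $a$ in one case, and $(a,f(bk,b))$ in the other, with $a,b\leftarrow\mathrm{Enc}_1(pk_1,m_\beta)$ independent.

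For the direction "$\sS_1[PK_{\sG_f}]$ secure $\Rightarrow$ $\sG_f$ secure", I will take an adversary $\cA$ against $\sG_f$ and build $\cA'$ against $\sS_1[PK_{\sG_f}]$.
\begin{enumerate}
\item On input $(pk_1,pk_2,bk)$, $\cA'$ forwards it to $\cA$ and relays the chosen pair $(m_0,m_1)$ to its challenger.
\item On receiving the challenge $a\leftarrow\mathrm{Enc}_1(pk_1,m_\beta)$, $\cA'$ needs $f(bk,b)$ for an independent encryption $b$ of the same $m_\beta$, without knowing $\beta$. This is the main obstacle. The remedy is to have $\cA'$ use the public encryption algorithm: $\mathrm{Enc}_1$ is a PPT algorithm, and re-encrypting $a$ directly is not possible, but $\cA'$ can make a fresh guess $\gamma\in\{0,1\}$ and compute $b\leftarrow\mathrm{Enc}_1(pk_1,m_\gamma)$.
\item $\cA'$ hands $(a,f(bk,b))$ to $\cA$ and outputs $\cA$'s answer.
\end{enumerate}

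This guess does not immediately give the same advantage. The cleaner route is a hybrid argument over four games $H_{\beta\gamma}$, in which the challenge is $(\mathrm{Enc}_1(m_\beta), f(bk,\mathrm{Enc}_1(m_\gamma)))$. The $\sG_f$ advantage of $\cA$ is the distance between $H_{00}$ and $H_{11}$. I will bound it via $H_{00}\to H_{10}\to H_{11}$.
\begin{itemize}
\item $H_{00}$ to $H_{10}$ varies only the first component. An $\sS_1[PK_{\sG_f}]$-adversary simulates it by computing the second component itself with $\gamma=0$.
\item $H_{10}$ to $H_{11}$ varies only the input to $f$. An $\sS_1[PK_{\sG_f}]$-adversary receives its challenge $c$, computes $f(bk,c)$ itself, and supplies the first component as $\mathrm{Enc}_1(pk_1,m_1)$.
\end{itemize}
Hence $\mathrm{Adv}[\cA]\le \mathrm{Adv}[\cA'_1]+\mathrm{Adv}[\cA'_2]$, which is negligible. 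Both reductions are PPT because $f$ and $\mathrm{Enc}_1$ are PPT and all needed keys are public knowledge in $\sS_1[PK_{\sG_f}]$. If one insists on a single adversary, $\cA'$ chooses one of the two at random, losing a factor of $2$.

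For the converse direction, I will adapt the proof of Proposition \ref{S1CPA}. Given $\cA_1$ attacking $\sS_1[PK_{\sG_f}]$, the adversary $\cA_f$ against $\sG_f$ passes its public key $(pk_1,pk_2,bk)$ to $\cA_1$, forwards the messages, and gives $\cA_1$ only the first component $a$ of its challenge $(a,f(bk,b))$. The view of $\cA_1$ is then identically distributed to its real experiment, so the advantages are equal. This yields the equivalence.
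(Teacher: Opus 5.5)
Your proof is correct and is essentially the paper's argument. The implication from $\sG_f$ to $\sS_1[PK_{\sG_f}]$ is the same reduction that forwards only the first component of the challenge. For the other implication, your two-step hybrid $H_{00}\to H_{10}\to H_{11}$ is the unrolled form of what the paper does in two stages: it first invokes the equivalence of IND-CPA and $2$-IND-CPA security for $\sS_1[PK_{\sG_f}]$, and then computes $f(bk,\cdot)$ on the second challenge ciphertext using the public $bk$.
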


\begin{proof}
We first show that if $\sG_f$ is IND-CPA secure, then $\sS_1[PK_{\sG_f}]$ is IND-CPA secure. Suppose $\cA$ is an IND-CPA attacker on $\sS_1[PK_{\sG_f}]$ scheme. We construct the following adversary $\cB$ attacking the IND-CPA security of $\sG_f$ as follows. At start, $\cB$ takes as input $(1^{\lambda}, pk_{\sG_f})$ and executes the program $\cA(1^{\lambda}, pk_{\sG_f})$. The attacker $\cB$ receives $(m_0,m_1) \leftarrow \cA(1^{\lambda}, pk_{\sG_f})$ and sends this pair to its challenger. The latter samples $b \leftarrow \{0,1 \}$ and returns to $\cB$ the challenge $c=(c_1,f(bk,c_1'))$, where $c_1, c_1' \leftarrow \mathrm{Enc}_1(pk_1,m_b)$.
Finally, $\cB$ terminates by outputting the bit $b'\leftarrow \cA(c_1)$. 
One obtains that
$$\mathrm{Adv}^{\mathrm{IND-CPA}}_{\sG_f}[\cB]({\lambda})= \mathrm{Adv}^{\mathrm{IND-CPA}}_{\sS_1[PK_{\sG_f}]}[\cA]({\lambda}),$$
which proves this implication.

To prove the other implication, we first point out that using a standard hybrid argument one can show that the IND-CPA security of an encryption scheme is equivalent to its $2$-IND-CPA security (see \cite{Sm16} for a detailed discussion). As opposed to the IND-CPA game, in the $2$-IND-CPA game the attacker receives from the challenger two encryptions of $m_b$, instead of one.

Suppose that $\cB$ is an IND-CPA attacker on $\sG_f$. We construct a $2$-IND-CPA attacker $\cA$ for the scheme $\sS_1[PK_{\sG_f}]$ as follows.
The attacker $\cA$ receives as input $(1^{\lambda},pk_{\sG_f})$ and sends this to $\cB$. On this input, the attacker $\cB$ produces two messages $m_0,m_1 \in \sP_1$ which are sent to $\cA$ and the latter passes them to its challenger. After receiving $m_0,m_1$, the challenger of $\cA$ chooses $b \leftarrow \{0,1\}$ and returns $c_1, c_1' \leftarrow \mathrm{Enc}_1(m_b)$ to the attacker $\cA$. The attacker $\cA$, knowing $bk$, is able to compute $f(bk,c_1') \in \cC_2$ and finishes by outputting $b' \leftarrow \cB(c_1, f(bk,c_1'))$. Now, one can verify that
$$\mathrm{Adv}^{\mathrm{2-IND-CPA}}_{\sS_1[PK_{\sG_f}]}[\cA]({\lambda}) = \mathrm{Adv}^{\mathrm{IND-CPA}}_{\sG_f}[\cB]({\lambda}).$$
By the discussion in the previous paragraph, the scheme $\sS_1[PK_{\sG_f}]$ is $2$-IND-CPA secure, so that $\cA$ has negligible advantage. The last equality shows that $\cB$ has also negligible advantage, which ends the argument. \end{proof}

Recall that the \textit{bridge key generation algorithm} produces the following ensembles of $\{SK_{1,\lambda} \}_{\lambda}$ distributions: $\{PK_{1,\lambda} \}_{\lambda}$, $\{PK_{2, \lambda} \}_{\lambda}$ and $\{BK_{\lambda}\}_{\lambda}$. Let $\mathscr{F}$ be the ensemble of finite distributions of triples $(pk_1, pk_2, bk)$. Note that $\pi_1 : \mathscr{F} \to PK_1$ is a morphism of finite distributions, so $\mathscr{F}$ is a $PK_1$-distribution as discussed in Section \ref{finitedist}.

\begin{theorem} \label{mt}
Assume that $\sS_1$ is IND-CPA secure and there exists a polynomial time constructible on fibers ensemble of $PK_1-$distributions $\widetilde\sF$ which is computational indistinguishable from $\sF$. Then the bridge $\mathbf{B}_{\iota, f}$ is IND-CPA secure.
\end{theorem}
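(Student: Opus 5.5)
By Theorem \ref{knowledge:thm}, it suffices to prove that $\sS_1[PK_{\sG_f}]$ is IND-CPA secure. In other words, we must show that the extra knowledge $(pk_2, bk)$ given alongside $pk_1$ is negligible for $\sS_1$. The plan is a two-step hybrid argument. First, replace the real public data $\sF$ by the simulated data $\widetilde\sF$. Then observe that an adversary receiving data drawn from $\widetilde\sF$ can be simulated by an adversary against plain $\sS_1$.

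Concretely, fix a PPT adversary $\cA$ against $\sS_1[PK_{\sG_f}]$. Consider the hybrid game $H$, which is the IND-CPA game except that the triple handed to $\cA$ is sampled from the fiber $\widetilde\sF|_{PK_1=pk_1}$ over the honestly generated $pk_1$. This sampling is possible in polynomial time by the assumption that $\widetilde\sF$ is polynomial-time constructible on fibers. Let $\widetilde{\cA}$ denote $\cA$ playing game $H$.

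First, we build an adversary $\cA'$ against $\sS_1$ whose advantage equals that of $\widetilde\cA$. On input $(1^\lambda, pk_1)$, $\cA'$ runs the PPT algorithm $A(1^\lambda, pk_1)$ to obtain $(pk_1, \tilde{pk}_2, \tilde{bk})$, feeds this to $\cA$, and forwards $\cA$'s messages and its final guess. Since $\sS_1$ is IND-CPA secure, $\widetilde\cA$ has negligible advantage.

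Second, for each fixed $b\in\{0,1\}$, we compare $\Pr\{\mathrm{Exp}_b[\cA]=1\}$ in the real game with the corresponding probability in $H$, using a distinguisher $D_b$ between $\sF$ and $\widetilde\sF$. Given a sample $(pk_1, pk_2, bk)$, $D_b$ runs $\cA$ on it, obtains $(m_0,m_1)$, computes $\mathrm{Enc}_1(pk_1, m_b)$ itself (which needs only $pk_1$), and outputs $\cA$'s guess. Because $\sF$ and $\widetilde\sF$ are both $PK_1$-distributions, $pk_1$ is distributed identically in the two cases. Hence $D_b$ outputs $1$ with exactly the real-game probability on samples from $\sF$ and the hybrid probability on samples from $\widetilde\sF$. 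By computational indistinguishability, these two probabilities differ negligibly. The triangle inequality over $b=0,1$ then gives
\[
\mathrm{Adv}^{\text{IND-CPA}}[\cA](\lambda) \le \mathrm{Adv}^{\text{IND-CPA}}[\cA'](\lambda) + \mathrm{negl}(\lambda),
\]
which is negligible. So $\sS_1[PK_{\sG_f}]$ is IND-CPA secure, and Theorem \ref{knowledge:thm} concludes the argument.

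The main point needing care is the hybrid step. The distinguisher must be able to reproduce the whole experiment from a sample of $\sF$ or $\widetilde\sF$ alone, without access to $sk_1$ or $sk_2$. This works because encryption in $\sS_1$ requires only $pk_1$, and $pk_1$ is a component of the sample. We must also check that the marginal distribution of $pk_1$ under $\widetilde\sF$ matches that of $PK_1$. This holds because $\widetilde\sF\to PK_1$ is a morphism of finite distributions, so sampling $pk_1$ from $PK_1$ and then sampling along the fiber reproduces $\widetilde\sF$.
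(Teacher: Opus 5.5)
Your proof is correct and follows essentially the same route as the paper. Both arguments replace $\sF$ by $\widetilde\sF$ at the cost of a distinguisher between the two ensembles, and then use the fiber sampler of $\widetilde\sF$ to turn the adversary into one against plain $\sS_1$. The differences are only presentational: the paper argues by contradiction with a single distinguisher that encrypts a random bit (after assuming $\sP_1=\{0,1\}$), whereas you use two fixed-$b$ distinguishers $D_0, D_1$ and a triangle inequality, which also handles adaptively chosen messages directly.
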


\begin{proof} Without losing generality we assume that $\mathscr{P}_1 = \{0, 1\}$. By the above theorem, it is enough to prove that $\sS_1[PK_{\sG_f}]$ is IND-CPA secure. We do the proof by contradiction, so we suppose that $\cA$ is an adversary attacking the scheme $\sS_1[PK_{\sG_f}]$ with non-negligible advantage. We think of
$\cA$ as being a distinguisher between the ensembles of distributions $\mathscr{F} \times_{PK_1} {\rm Enc}_1(PK_1,0)$ and $\mathscr{F} \times_{PK_1} {\rm Enc}_1(PK_1,1)$. The first claim is that, if $\cA$ can distinguish with non-negligible advantage between these two distributions then $\cA$ distinguishes with non-negligible advantage between $\widetilde\sF \times_{PK_1} {\rm Enc}_1(PK_1,0)$ and $\widetilde\sF \times_{PK_1} {\rm Enc}_1(PK_1,1)$. 
To prove the claim we suppose that this is not the case and we construct a distinguisher $\cD$ for the distributions
$\sF$ and $\widetilde\sF$. As the ensemble of distributions $\widetilde\sF$ is computationally indistinguishable from $\sF$, for every $\lambda$, the distribution $\widetilde\sF_{\lambda}$ consists of triples of the form $(pk_1,\alpha,\beta)$. 

The distinguisher $\cD$ runs as follows. It first receives a triple $(pk_1, x, y)$ from the challenger, chooses at random a bit $b \leftarrow \{0, 1\}$ and encrypts $b$ using $pk_1$ to obtain a ciphertext $c$. The distinguisher $\cD$ sends the quadruple $(pk_1, x, y, c)$ to $\cA$ and outputs

\begin{equation*}
\cD(pk_1, x, y) := 
\begin{cases}
1 & \text{if} \; \; \cA (pk_1, x, y, c) = b \\
0 & \text{otherwise} \\
\end{cases}.
\end{equation*}

\noindent We note that the labels $b=1$ and $b=0$, as outputted by $\cA$, correspond to the ensembles $\sF$ and $\widetilde\sF$, respectively. Notice that
\begin{equation*}
\mathrm{Pr} \left\{\mathrm{Exp}_1[\cD] = 1 \right\} = 
\frac{1}{2} \mathrm{Pr} \left\{\mathrm{Exp}_0[\cA \vert_{\sF}] = 0 \right\} + \frac{1}{2} \mathrm{Pr} \left\{\mathrm{Exp}_1[\cA \vert_{\sF}] = 1 \right\}, 
\end{equation*}

\noindent where $\mathrm{Exp}_b[\cA \vert_{\sF}]$ means that in
the experiment $\mathrm{Exp}_b$ the challenger chooses the triple
$(pk_1,x,y)=(pk_1, pk_2, bk)$ according to $\sF$. Using analogous notation for $\widetilde\sF$, we have:
\begin{equation*}
\mathrm{Pr} \left\{\mathrm{Exp}_0[\cD] = 1 \right\} = 
\frac{1}{2} \mathrm{Pr} \left\{\mathrm{Exp}_0[\cA \vert_{\widetilde \sF}] = 1 \right\} + \frac{1}{2} \mathrm{Pr} \left\{\mathrm{Exp}_1[\cA \vert_{\widetilde \sF}] = 0 \right\}. 
\end{equation*}

\noindent Since the advantage of $\cA|_{\sF}$ is non-negligible, there exists a positive integer $k$ such that 

\begin{equation} \label{eq1}
  \left| \mathrm{Pr} \left\{\mathrm{Exp}_1[\cD] = 1 \right\} - \dfrac{1}{2} \right| > \dfrac{1}{\lambda^k}
\end{equation}

\noindent for infinitely many $\lambda$'s. Also, since $\mathrm{Adv}[\cA\vert_{\widetilde \sF}](\lambda) = \mathrm{negl}(\lambda)$, we have 

\begin{equation} \label{eq2}
  \left| \mathrm{Pr} \left\{\mathrm{Exp}_0[\cD] = 1 \right\} - \dfrac{1}{2} \right| = \mathrm{negl}(\lambda).
\end{equation}

\noindent From (\ref{eq1}) and (\ref{eq2}) we infer that 
$$\mathrm{Adv}[\cD](\lambda) = \left|  \mathrm{Pr} \left\{\mathrm{Exp}_1[\cD] = 1 \right\} -  \mathrm{Pr} \left\{\mathrm{Exp}_0[\cD] = 1 \right\} \right|$$ 

\noindent is non-negligible, which contradicts the assumption about the computational indistinguishability of the two distributions $\sF$ and $\widetilde \sF$.

Now we use $\cA\vert_{\widetilde \sF}$ to construct an adversary $\cB$
on $\sS_1$. After receiving the pair $(pk_1, c)$ (as before $c \leftarrow {\rm Enc}_1(pk_1, b)$) from the challenger, $\cB$ is 
using the sampling algorithm of $\widetilde \sF$ to get a triple
$(pk_1, \alpha, \beta)$. The adversary $\cB$ sends $(pk_1, \alpha, \beta, c)$ 
to $\cA\vert_{\widetilde \sF}$ and outputs the bit received from
it. It is clear that

\begin{equation*}
  \mathrm{Adv}[\cB](\lambda) =  \mathrm{Adv}[\cA\vert_{\widetilde \sF}](\lambda)
\end{equation*}

\noindent so that $\cB$ breaks the IND-CPA security of $\sS_1$ with
non-negligible advantage, and this contradicts our assumption.
\end{proof}

\medskip

\section{A general recipe for constructing bridges} 
\label{recipe:gentry}

As we shall explain in what follows, the \textit{Recrypt} algorithm, used in the bootsrapping procedure that transforms a somewhat homomorphic encryption scheme into a fully homomorphic encryption scheme (see \cite{Ge10}), can be adapted to our situation in order to give a general recipe for the construction of a bridge. We will call this method \textit{Gentry's recipe} and say that the bridges obtained using it are of \textit{Gentry type}.

Let us consider an encryption scheme $$\sS=(\sP_{\sS}, \sC_{\sS},\mathrm{KeyGen}_{\sS}, \mathrm{Enc}_{\sS}, \mathrm{Dec}_{\sS})$$ and a homomorphic encryption scheme $$\sH=(\sP_{\sH}, \sC_{\sH}, \mathrm{KeyGen}_{\sH}, \mathrm{Enc}_{\sH}, \mathrm{Dec}_{\sH}, \mathrm{Eval}_{\sH}),$$
such that $\sP_{\sH}$ has a ring structure and there exists an injective map $\iota : \sP_{\sS} \hookrightarrow \sP_{\sH}$ satisfying the properties 1.(a)-(b) in Definition \ref{bridge}.

In this construction, the key generation algorithm is as follows. First, it runs $\mathrm{KeyGen}_{\sS}(1^{\lambda})$ to sample from the distribution $SK_{\sS}$ and then, independently, it runs $\mathrm{KeyGen}_{\sH}(1^\lambda)$ to sample from $SK_{\sH}$. We point out that the distribution $SK_2$ in the definition of the bridge is in fact the product $SK_{\sS} \times SK_{\sH}$ and the map $SK_2 \to SK_1$ (see Figure \ref{bk:diagram}) is the projection on the first component $SK_{\sS} \times SK_{\sH} \to SK_{\sS}$. Samples for the public keys $pk_{\sS}$ and $pk_{\sH}$ are generated, independently, using the key generation algorithms of the two schemes. Given a quadruple $(sk_{\sS}, pk_{\sS}, sk_{\sH}, pk_{\sH})$ constructed as above, the algorithm creates $bk$ as the vector of encryptions of all the bits of $sk_{\sS}$ under $pk_{\sH}$ (see below). This is how the distribution of bridge keys $BK$ is obtained.

The PPT algorithm $f$ mentioned in the third part of Definition \ref{bridge} is in this case the homomorphic evaluation (in $\sH$) of the algorithm $\mathrm{Dec}_{\sS}$. 
 We need to realise $\mathrm{Dec}_{\sS}$ as a map $\mathscr{P}_{\mathscr{H}}^{\ell} \rightarrow \mathscr{P}_{\mathscr{H}}$, and for this we use the ring structure on $\sP_{\sH}$. 
Suppose that the ciphertext space $\sC_{\sS}$ is a subset of $\{ 0, 1\}^n$ and that the set of secret keys is a subset of $\{ 0, 1\}^e$,
so that ${\rm Dec}_{\sS}: \{ 0, 1\}^e \times \{ 0, 1\}^n \rightarrow \mathscr{P}_{\sS}$. We construct the map $\widetilde{\rm Dec}_{\sS}: \mathscr{P}_{\mathscr{H}}^e \times \mathscr{P}_{\mathscr{H}}^n \rightarrow \mathscr{P}_{\mathscr{H}}$ as follows. Letting $\mathscr{P}_{\mathscr{H}}$ be a subset of $\{0, 1\}^m$, we have that $\iota \circ {\rm Dec}_{\sS}: \{ 0, 1\}^e \times \{ 0, 1\}^n \rightarrow \mathscr{P}_{\mathscr{H}}$ is a vector $(g_1,...,g_m)$ of boolean circuits expressed using ${\rm XOR}$ and ${\rm AND}$ gates. Let $\tilde{g_i}: \mathscr{P}_{\mathscr{H}}^e \times \mathscr{P}_{\mathscr{H}}^n \rightarrow \mathscr{P}_{\mathscr{H}}$ be the circuit obtained by replacing each ${\rm XOR}(x,y)$- gate by $x\oplus y:=2(x+y) - (x+y)^2$ and each ${\rm AND}(x,y)$ gate by $x\otimes y := x \cdot y$, where $+$ and $\cdot$ are the addition and multiplication in $\mathscr{P}_{\mathscr{H}}$. Notice that the subset of $\mathscr{P}_{\mathscr{H}}$ consisting of its zero element $0_\sH$ and its unit $1_\sH$ together with $\oplus$ and $\otimes$ is a realisation of the field with two elements inside $\sP_\sH$. In other words, if $c=(c[1],...,c[n]) \in \sC_{\sS}$ and $sk_\sS=(sk[1],...,sk[e])$ is the secret key, then $\tilde{g_i}(sk[1]_\sH,...,sk[e]_\sH, c[1]_\sH,...,c[n]_\sH) = m_\sH$ if $g_i(sk[1],...,sk[e],c[1],...,c[n]) = m$ for all $i$, where $m \in \{0, 1\}$.
For an element $x \in \sP_\sH$, we let $[x=1_{\sH}]$ be the equality test, which returns $1$ if $x=1_{\sH}$ and $0$ otherwise. 
Finally, $\widetilde{\rm Dec}_{\sS}:\mathscr{P}_{\mathscr{H}}^e \times \mathscr{P}_{\mathscr{H}}^n \to \{0,1\}^m$ is defined by: $$\left(\left[ \tilde{g_i}(y_1,..., y_e, x_1,..., x_n) = 1_\sH \right]\right)_{i = \overline{1, m}}.$$ 
One can verify that 
$$\widetilde{\rm Dec}_{\sS}(sk[1]_\sH,...,sk[e]_\sH, c[1]_\sH,...,c[n]_\sH)= \iota\circ {\rm Dec}_{\sS}(sk,c).$$

Now we are ready to define the bridge map. Given a ciphertext $c \in \mathscr{C}_{\sS}$, the algorithm $f$ first encrypts the $n$ bits of $c$ (viewed as elements of $\sP_\sH$) under $pk_{\sH}$ and retains these encryptions in a vector $\tilde{c}$. The bridge key $bk$ is obtained by encrypting the bits of $sk_{\sS}$ under $pk_{\sH}$. Then, the algorithm outputs:
$$f(bk,c) = \mathrm{Eval}_{\sH}(evk_{\sH},\widetilde{\rm Dec}_{\sS},bk, \tilde c)$$
Assuming that $\sH$ can evaluate $\widetilde{\rm Dec}_{\sS}$ we have:
\begin{eqnarray*}
 {\rm Dec}_{\sH} (f(bk,c)) & = & {\rm Dec}_{\sH}  \left( \mathrm{Eval}_{\sH}(evk_{\sH},\widetilde{\rm Dec}_{\sS}, bk, \tilde c) \right)  \\
& = &  \iota \left( {\rm Dec}_{\sS} ({\rm Dec}_{\sH}(bk), {\rm Dec}_{\sH}(\tilde{c})) \right)\\
& = & \iota \left( {\rm Dec}_{\sS} (sk_{\sS}, c) \right)
\end{eqnarray*}
which shows that third condition in the definition of a bridge is satisfied.

\begin{remark}
The above construction relies on the fact that the plaintext space of $\sH$, being a ring, can be used to simulate an $\mathbb{F}_2$-structure inside it. 
\end{remark}

An example of the above construction can be found in \cite{GHS12}, where the authors managed to homomorphically evaluate the AES-128 circuit (encryption and decryption) using an optimized implementation of the BGV scheme \cite{BGV12}. Once the plaintext spaces and the embedding $\iota$ are fixed, the evaluation of this decryption circuit can be seen as a Gentry type bridge. The bridge key consists of the BGV encryptions of the eleven AES round keys (see Section 4 of \cite{GHS12}). We note that here the round keys are embedded in the plaintext, so it was not necessary to encrypt the bits of the round keys, as discussed at the beginning of the section. This results in a simpler homomorphic evaluation of AES decryption. Nonetheless, this bridge is essentially obtained using Gentry's recipe.

\subsection{On the security of Gentry type bridges}

The aim of this subsection is to show that if $\sS$ and $\sH$ are IND-CPA secure, then any Gentry type bridge $B_{\iota,f}$ from $\sS$ to $\sH$ is IND-CPA secure. The plan is to apply Theorem \ref{mt} to the above construction. 

Recall that $\mathscr{F}$ is the ensemble of finite distributions of triples $(pk_\sS, pk_\sH, bk)$, where $bk$ is a vector of encryptions of the form $(bk[1],...,bk[e])$ with $bk[i] \leftarrow {\rm Enc}_\sH(pk_{\sH}, sk[i]_\sS)$ for all $i$. Next, let $\widetilde{\mathscr{F}}$ be the ensemble of finite distributions of triples $(pk_\sS, pk_\sH, \widetilde{bk})$, where $pk_{\sS}$, $pk_{\sH}$ are independently outputted by $\mathrm{KeyGen_{\sS}}$ and $\mathrm{KeyGen}_{\sH}$, respectively and $\widetilde{bk}:=(\widetilde{bk}[1],...,\widetilde{bk}[e])$ with $\widetilde{bk}[i] \leftarrow {\rm Enc}(pk_\sH, 0_\sH)$ for all $i\in \overline{1,e}$. Notice that $\widetilde{\mathscr{F}}$ is polynomial-time constructible on fibers as a $PK_\sS$-ensemble of finite distributions (see Definition \ref{efsampl}).
Let us remark that one can choose $\widetilde{\mathscr{F}}$ in a different way, setting $\widetilde{bk}$ to be a  vector of encryptions of any fixed $e$-long bit vector. If the scheme $\sH$ is IND-CPA secure, then one can prove by a standard hybrid argument (see the next proposition) that the two versions are in fact computational indistinguishable. Therefore, the choice of the particular fixed bit vector that is encrypted to get $\widetilde{bk}$ does not matter.

\begin{proposition} \label{prop:proof}
If $\sH$ is IND-CPA secure, then the ensembles $\mathscr{F}$ and $\widetilde{\mathscr{F}}$ are computationally indistinguishable. 
\end{proposition}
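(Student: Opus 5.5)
The plan is a hybrid argument over the $e$ coordinates of the bridge key, reducing each step to the IND-CPA security of $\sH$.

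\textbf{Hybrids.} For $0 \le j \le e$, let $\sF^{(j)}$ be the ensemble of triples $(pk_\sS, pk_\sH, bk^{(j)})$, where $(sk_\sS, pk_\sS)$ and $(sk_\sH, pk_\sH)$ are generated independently as in Gentry's recipe. The vector $bk^{(j)}$ is defined coordinatewise:
\[
bk^{(j)}[i] \leftarrow \mathrm{Enc}_\sH(pk_\sH, sk[i]_\sH) \text{ for } i \le e-j, \qquad bk^{(j)}[i] \leftarrow \mathrm{Enc}_\sH(pk_\sH, 0_\sH) \text{ for } i > e-j.
\]
Then $\sF^{(0)} = \sF$ and $\sF^{(e)} = \widetilde\sF$. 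Any PPT distinguisher $\cD$ between $\sF$ and $\widetilde\sF$ has advantage bounded by the sum of its advantages between consecutive hybrids $\sF^{(j-1)}$ and $\sF^{(j)}$. Since $e = e(\lambda)$ is polynomial in $\lambda$, it suffices to bound each consecutive gap by a common negligible function. The uniformity over $j$ is obtained by the standard trick: pick $j$ uniformly at random inside the reduction, which loses only a factor of $e$.

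\textbf{Reduction for one step.} Build an IND-CPA adversary $\cB$ against $\sH$ as follows.
\begin{enumerate}
\item On input $(1^\lambda, pk_\sH)$, $\cB$ runs $\mathrm{KeyGen}_\sS(1^\lambda)$ itself to get $(sk_\sS, pk_\sS)$. This is possible because $sk_\sS$ is generated independently of the $\sH$ keys.
\item $\cB$ samples $j$ uniformly from $\{1,\dots,e\}$ and submits the challenge messages $m_0 = sk[e-j+1]_\sH$ and $m_1 = 0_\sH$.
\item On receiving the challenge ciphertext $\mathrm{ct}$, $\cB$ fills in the rest of the bridge key itself using $pk_\sH$. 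It encrypts the true key bits $sk[i]_\sH$ for $i < e-j+1$ and $0_\sH$ for $i > e-j+1$, and places $\mathrm{ct}$ in position $e-j+1$.
\item $\cB$ feeds the resulting triple to $\cD$ and outputs $\cD$'s answer.
\end{enumerate}
If $\mathrm{ct}$ encrypts $m_0$, the triple is distributed exactly as $\sF^{(j-1)}$; if it encrypts $m_1$, exactly as $\sF^{(j)}$. A telescoping computation then gives
\[
\mathrm{Adv}[\cB] = \frac{1}{e}\,\bigl|\Pr\{\cD(\sF)=1\} - \Pr\{\cD(\widetilde\sF)=1\}\bigr|.
\]
Hence a non-negligible advantage for $\cD$ yields a non-negligible advantage for $\cB$, contradicting the IND-CPA security of $\sH$.

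\textbf{Main point to check.} The reduction is legitimate only because the challenge messages may depend on $sk_\sS$, which $\cB$ knows, and because $sk_\sS$ is independent of $sk_\sH$. In particular, no circular-security issue arises: the bridge key encrypts the $\sS$ key under $\sH$, never $sk_\sH$ itself. I would also verify two details. First, the evaluation key $evk_\sH$ appears in $pk_\sH$, so it is given to $\cB$ as part of its public key. Second, the fibers over $pk_\sS$ match, i.e.\ the hybrid triples have the correct joint distribution with $pk_\sS$ as $PK_\sS$-distributions. Both follow directly from the independence of the two key generations.
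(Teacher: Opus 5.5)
Your proposal is correct and follows essentially the same hybrid argument as the paper. The paper also interpolates between $\widetilde{\sF}$ and $\sF$ one coordinate of the bridge key at a time, and uses an IND-CPA adversary on $\sH$ that generates $(sk_\sS,pk_\sS)$ itself and plants the challenge ciphertext in the switched coordinate. The one difference is that the paper uses $e$ separate adversaries $\cB_i$ and bounds $\mathrm{Adv}[\cD]$ by the sum of their advantages, whereas your single adversary with a uniformly random index $j$, which picks the challenge messages $sk[e-j+1]_\sH$ and $0_\sH$ after generating $sk_\sS$, makes explicit both the uniformity over the hybrid index and the legality of the challenge messages.
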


\begin{proof} 
Let $\cD$ be a distinguisher between the two ensembles $\mathscr{F}$ and $\widetilde{\mathscr{F}}$. We denote by $\sG_i$ the distribution of triples $(pk_{\sS}, pk_{\sH}, x)$ where the pair $(pk_{\sS}, pk_{\sH})$ is chosen exactly as in the case of $\mathscr{F}$, or  $\widetilde{\mathscr{F}}$, and $x:=(x[1],...,x[e])$ where $x[j] \leftarrow {\rm Enc}(pk_\sH, sk_{\sS}[j])$ for all $j \in \overline{1,i}$ 
and $x[j] \leftarrow {\rm Enc}(pk_\sH, 0)$ for all $j \in \overline{i+1,e}$. Notice that $\{\sG_{e(\lambda)}\}_{\lambda}$ is the same as $\mathscr{F}$, and $\{\sG_0\}_{\lambda}$ is $\widetilde{\mathscr{F}}$. For each $i\in \overline{1,e}$ we construct an attacker $\cB_i$ on the scheme $\sH$ as follows. The attacker receives from the challenger the triple $(1^{\lambda}, pk_{\sH}, c)$,
where $c$ is either an encryption of 0 or an encryption of 1. The attacker uses ${\rm KeyGen}_{\sS}$ to generate a pair $(sk_{\sS}, pk_{\sS})$ and then constructs an $e$-long vector $y$ as follows:
$y[j] \leftarrow {\rm Enc}(pk_{\sH}, sk_{\sS}[j])$ for $j<i$, $y[i] =c$, and $y[j] \leftarrow {\rm Enc}(pk_{\sH}, 0)$ for $j>i$. Then the attacker $\cB_i$ runs $\cD(1^{\lambda}, pk_{\sS}, pk_{\sH}, y)$ and outputs $sk[i]$ if the answer received from $\cD$ is $\mathscr{F}$ and $0$ otherwise. Basically, $\cD$ can be used as a distinguisher between the ensembles $\{\sG_{i-1}\}_{\lambda}$ and $\{\sG_i\}_{\lambda}$, which gives rise to $\cB_i$. Notice that 

$$
\mathrm{Adv}^{\mathrm{IND-CPA}}[\cD](\lambda) \leq \sum_{i=1}^{e(\lambda)} \mathrm{Adv}^{\mathrm{IND-CPA}}[\cB_i](\lambda),
$$

\noindent where we used the fact that the advantage of $\cB_i$ is equal to the advantage of $\cD$ as a distinguisher between $\sG_i$ and $\sG_{i-1}$. Since $\sH$ is IND-CPA secure and $e(\lambda)$ is polynomial in $\lambda$, we get that $\cD$ has negligible advantage. 
\end{proof}

\noindent The result of Proposition \ref{prop:proof} combined with Theorem \ref{mt} yields the following result:

\begin{theorem}
Assume that $\sS$ and $\sH$ are both IND-CPA secure, then any Gentry type bridge $\mathbf{B}_{\iota, f}$ from $\sS$ to $\sH$ is IND-CPA secure.
\end{theorem}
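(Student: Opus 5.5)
The plan is to obtain the statement directly from Theorem \ref{mt}, with $\sS_1=\sS$ and $\sS_2=\sH$. Its first hypothesis, the IND-CPA security of $\sS_1=\sS$, is assumed. What remains is to produce a polynomial-time constructible on fibers ensemble of $PK_\sS$-distributions $\widetilde\sF$ that is computationally indistinguishable from $\sF$.

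For $\widetilde\sF$ I will take the ensemble introduced before Proposition \ref{prop:proof}. It consists of triples $(pk_\sS, pk_\sH, \widetilde{bk})$, where $pk_\sH$ is generated independently of $\sS$ and $\widetilde{bk}$ is a vector of $e$ encryptions of $0_\sH$ under $pk_\sH$.

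Constructibility on fibers is checked directly. In the Gentry recipe the $\sH$-keys are sampled independently of $sk_\sS$, and the structural map $\widetilde\sF\to PK_\sS$ is the first projection. Hence, given $1^\lambda$ and a $pk_\sS$ of level $\lambda$, the fiber over $pk_\sS$ is sampled by running $\mathrm{KeyGen}_\sH(1^\lambda)$ and then $\mathrm{Enc}_\sH(pk_\sH,0_\sH)$ $e$ times. Since $e=e(\lambda)$ is polynomial, this is a PPT procedure. One point needs care: the fiber distribution must match $\widetilde\sF|_{PK_\sS=pk_\sS}$ exactly. This holds because in $\widetilde\sF$ the second and third components are independent of $pk_\sS$, so the conditional distribution is just the product distribution.

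Computational indistinguishability of $\sF$ and $\widetilde\sF$ is exactly Proposition \ref{prop:proof}, which applies because $\sH$ is IND-CPA secure. Both hypotheses of Theorem \ref{mt} are therefore satisfied, and $\mathbf{B}_{\iota,f}$ is IND-CPA secure.

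The only genuine subtlety is a consistency check between two descriptions of $\sF$. The $\sF$ of Theorem \ref{mt} is defined through the bridge key generation algorithm, while the $\sF$ of Proposition \ref{prop:proof} is described concretely. I need to confirm that they coincide as $PK_\sS$-ensembles. In particular, $sk_2=(sk_\sS,sk_\sH)$ must be drawn from $SK_\sS\times SK_\sH$, so that conditioning on $pk_\sS$ leaves $bk$ distributed as encryptions of the bits of a secret key consistent with $pk_\sS$. The evaluation key $evk_\sH$, if it is part of the public data, should be treated as part of $pk_\sH$ in both ensembles; it is produced by $\mathrm{KeyGen}_\sH$ independently of $\sS$, so this causes no problem.
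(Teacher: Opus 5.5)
Your proposal is correct and follows the paper's own route: you apply Theorem \ref{mt} with $\widetilde\sF$ the ensemble of triples $(pk_\sS, pk_\sH, \widetilde{bk})$ built from encryptions of $0_\sH$, whose constructibility on fibers the paper notes directly and whose indistinguishability from $\sF$ is Proposition \ref{prop:proof}. Your remarks on matching the two descriptions of $\sF$ and on treating $evk_\sH$ as part of the public data are sensible bookkeeping that the paper leaves implicit.
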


\subsection{A variant of Gentry's recipe}
The aim of this subsection is to give a new variant of Gentry's recipe for the construction of bridges. For this, we need first to introduce the product of two encryption schemes. Suppose that $\sS_i=(\sP_i, \sC_i,\mathrm{KeyGen}_{i}, \mathrm{Enc}_{i}, \mathrm{Dec}_{i})$, $i \in \{1, 2\}$ are two encryption schemes, then the product $\sS_1 \times \sS_2$ is defined as follows. The plaintext space is defined as $\sP_1 \times \sP_2$ and the ciphertext space as $\sC_1 \times \sC_2$.
The Key Generation algorithm of the product scheme uses independently the key generation algorithms of the two schemes to produce two pairs $(sk_1, pk_1)$ and $(sk_2, pk_2)$ of keys and sets the secret key as $(sk_1, sk_2)$, and sets the public key as $(pk_1, pk_2)$. An encryption of a message $(m_1, m_2) \in \sP_1 \times \sP_2$ is just a pair $(c_1, c_2)$, where $c_1 \leftarrow {\rm Enc}_1(pk_1, m_1)$ and $c_2 \leftarrow {\rm Enc}_2(pk_2, m_2)$. Finally, the decryption of $(c_1, c_2)$ is $({\rm Dec}_1(sk_1, c_1), {\rm Dec}_2(sk_2, c_2))$.
In the same way, one can define the product of $p \geq 2$ encryption schemes. If $\sH$ is an encryption scheme, we shall denote by ${\sH}^p$ the product of $p$ copies of $\sH$. 

Now, we describe this new construction. We use the same notations as in the beginning of this section, and we assume that $\sP_{\sH} = \{0, 1\}$. Let $\iota: \sP_{\sS}  \hookrightarrow \{0, 1\}^p$ be a representation of the plaintext space of $\sS$, which can be viewed as the map $\iota: \sP_{\sS}  \hookrightarrow \sP^p_\sH$, by identifying $\{0, 1\}^p$ with the plaintext space of $\sH^p$. We construct a bridge from $\sS$ to $\sH^p$ as follows. Notice that the decryption algorithm of $\sS$ is in fact a $p$-long vector of boolean algorithms $g_i: \{0, 1\}^e \times \{0, 1\}^n \rightarrow \{0, 1\}$, that is
${\rm Dec}_{\sS} (sk_{\sS}, c) = (g_1(sk_{\sS}, c),...,g_p(sk_{\sS}, c))$, where $\{0, 1\}^n$ and $\{0,1\}^{e}$ correspond to $\sC_{\sS}$ and the support of secret keys of $\sS$, respectively. The bridge key $bk$ is obtained by encrypting the bits of $sk_{\sS}$ under each component of the public key of $\sH^p$.

The bridge map $f$ is the vector obtained by homomorphically evaluating the circuits $g_i$ in $\sH$. More precisely
$$
f(bk,c) = \left( \mathrm{Eval}_{\sH}(evk_{\sH}, g_i, bk, \tilde c) \right)_{i=\overline{1,p}},
$$
where $\tilde c$ is defined as above. Notice that, if $c \leftarrow {\rm Enc}(pk_{\sS}, m)$ then 

$$
{\rm Dec}_{\sH^p}(f(bk,c)) = \iota(m).
$$

\medskip 

\noindent\textbf{Security}. As in the previous subsection, it can be shown that if $\sS$ and $\sH$ are IND-CPA secure, then the bridge is also IND-CPA secure. The proof is very similar to that of Proposition \ref{prop:proof}, hence omitted here. 

\section{Conclusions}

Access to secure and efficient bridges between homomorphic encryption schemes would be helpful for applications of cloud computing on sensitive data. Investigating theoretical results for proving the security of such protocols is therefore an important topic. Our main theorem is such a tool, and a particular case of it recovers the already known security of Gentry-type bridges.

\subsection*{Acknowledgements}

The authors are indebted to George Gugulea and Mihai Togan for helpful discussions and comments during the preparation of this work. We are also grateful to the anonymous reviewers for useful suggestions.

\bibliographystyle{MathPhySci}

\begin{thebibliography}{00}

\bibitem{ASP14} Alperin-Sheriff J., Peikert C.: Faster Bootstrapping with Polynomial Error. In: Garay J.A., Gennaro R. (eds) Advances in Cryptology, CRYPTO 2014, LNCS, vol 8616, pp. 297--314. Springer, Berlin, Heidelberg (2014). 

\bibitem{Awo} Awodey, S.:  Category theory, 2nd edn, Oxford University Press, Oxford (2010).

\bibitem{BPP18} Barcau, M.,  Pa\c sol, V., Ple\c{s}ca, C.: Monoidal Encryption over $\mathbb{F}_2$, In: Lanet JL., Toma C. (eds) Innovative Security Solutions for Information Technology and Communications, SECITC 2018, LNCS, vol 11359, pp. 504--517, Springer, Cham (2019). 

\bibitem{BGGJ} Boura, C., Gama, N., Georgieva, M., Jetchev, D.: CHIMERA: Combining Ring-LWE-based Fully Homomorphic Encryption Schemes, Journal of Mathematical Cryptology \textbf{14}(1), pp. 316 -- 338 (2020).  

\bibitem{BGV12} Brakerski, Z., Gentry, C., Vaikuntanathan, V.: (Leveled) fully homomorphic encryption without bootstrapping, ACM Transactions on Computation Theory \textbf{6}(3), No. 13, pp. 1--36 (2014). 

\bibitem{BLPRS13} Brakerski, Z., Langlois, A., Peikert, C., Regev, O., Sthel\' e: 
Classical hardness of learning with errors, In: STOC '13: Proceedings of the forty-fifth annual ACM symposium on Theory of Computing, pp. 575--584.  


\bibitem{B12} Brakerski, Z.: Fully homomorphic encryption without modulus switching from classical GapSVP, In: Safavi-Naini R., Canetti R. (eds) Advances in Cryptology, CRYPTO 2012, LNCS, vol. 7417, pp. 868 -- 886. Springer, Berlin, Heidelberg (2012).  

\bibitem{CIL17} Castagnos, G., Imbert, L., Laguillaumie, F.:  Encryption Switching Protocols Revisited: Switching Modulo $p$, In: Katz J., Shacham H. (eds) Advances in Cryptology, CRYPTO 2017, LNCS, vol. 10401, pp. 255 -- 287, Springer, Cham (2017). 

\bibitem{CSGN20}certSIGN RD: CSGN GitHub repository, \url{https://github.com/certFHE/CSGN}, Last accessed on 20 May 2021.

\bibitem{CGGI19} Chillotti, I., Gama, N., Georgieva, M. and Izabachène, M.: TFHE: Fast Fully Homomorphic Encryptionover the Torus, Journal of Cryptology, \textbf{33} pp. 34 -- 91 (2020).

\bibitem{Coh19} Cohen, A.: What About Bob? The Inadequacy of CPA Security for Proxy Reencryption, PKC (2) 2019, pp. 287--316.

\bibitem{CPP16} Couteau, G., Peters, T., Pointcheval, D.: Encryption Switching Protocols, In: Robshaw M., Katz J. (eds) Advances in Cryptology, CRYPTO 2016, LNCS, vol. 9814, pp. 308 -- 338. Springer, Berlin, Heidelberg (2016).

\bibitem{Rasta18} Dobraunig, C., Eichlseder, M., Grassi, L., Lallemand, V., Leander, G., List, E., Mendel, F., Rechberger, C.: Rasta: A Cipher with Low ANDdepth and Few ANDs per Bit, In: Advances in Cryptology, CRYPTO 2018, LNCS, vol. 10991, pp. 662 -- 692. Springer, Cham (2018).

\bibitem{Pasta21} Dobraunig, C., Grassi, L., Helminger, L., Rechberger, C., Schofnegger, M. and Walch, R.: Pasta: A Case for Hybrid Homomorphic Encryption. In Cryptology ePrint Archive (2021).

\bibitem{Do03} Dodis, Y., Ivan, A.: Proxy cryptography revisited, In: Proceedings of the Tenth Network
and Distributed System Security Symposium, February 2003.


\bibitem{DN21} Dottling, N., Nishimaki, R.: Universal Proxy Re-Encryption, Cryptology ePrint Archive, Report 2018/840, to appear in PKC '21.

\bibitem{FV12} Fan, J., Vercauteren, F.: Somewhat Practical Fully Homomorphic Encryption, IACR Cryptol. ePrint Arch. 2012: 144 (2012).

\bibitem{Ge09T} Gentry, C: A fully homomorphic encryption scheme, PhD thesis, Stanford University, 2009.

\bibitem{Ge10} Gentry, C.: Computing arbitrary functions of encrypted data, Communications of the ACM, \textbf{53}(3), pp.  97 -- 105 (2010).

\bibitem{GHS12} Gentry C., Halevi S., Smart N.P.: Homomorphic Evaluation of the AES Circuit. In: Safavi-Naini R., Canetti R. (eds) Advances in Cryptology, CRYPTO 2012, LNCS, vol. 7417, pp. 850--867. Springer, Berlin, Heidelberg (2012).

\bibitem{Go90} Goldreich, O.: A note on computational indistinguishability, Information Processing Letters, \textbf{34}(6), pp. 277 -- 281.

\bibitem{GM82} Goldwasser, S., Micali, S.: Probabilistic encryption and how to play mental poker keeping secret all partial information, In: STOC '82: Proceedings of the fourteenth annual ACM symposium on Theory of computing, pp. 365 -- 377. Association for Computing Machinery, New York, NY (1982).

\bibitem{GM84} Goldwasser, S., Micali, S.: Probabilistic Encryption, Journal of Computer and System Sciences, \textbf{28}(2), pp. 270--299 (1984).

\bibitem{HELIB} HElib library homepage: An Implementation of homomorphic encryption by Halevi and Shoup, https://github.com/shaih/HElib/.

\bibitem{Mic18} Micciancio, D.: On the Hardness of Learning With Errors with Binary Secrets, Theory of Computing, \textbf{14}(13), pp. 1--17 (2018).

\bibitem{Reg05} Regev, O.: On Lattices, Learning with Errors, Random Linear Codes, and Cryptography, In Harold N. Gabow and Ronald Fagin, editors, STOC, pages 84–93. ACM, 2005.


\bibitem{SEAL} Microsoft Research, Redmond, WA., Microsoft SEAL (release 3.6), \url{https://github.com/Microsoft/SEAL}, November, 2020.

\bibitem{SYY} Sander, T., Young, A., Yung, M.: Non-Interactive CryptoComputing For $NC^1$. In: FOCS '99: Proceedings of the 40th Annual Symposium on Foundations of Computer Science, pp. 554 -- 566, IEEE Computer Society, NW Washington, DC, United States (1999).

\bibitem{Sm16} Smart, N.: Cryptography Made Simple, Springer, Cham (2016).
\end{thebibliography}

\appendix

\section{Examples of Gentry bridges}

The aim of this appendix is to emphasize the fact that, for an encryption scheme $\sS$, different representations for the decryption algorithm $\mathrm{Dec}_{\sS}$ give rise to different bridges from $\sS$ to a FHE scheme $\sH$. For practical applications, one can select the appropriate representation that best suits the implementation of the desired application. Having this in mind, we chose to exhibit the encryption scheme CSGN introduced in \cite{BPP18} and implemented in \cite{CSGN20}, whose decryption algorithm admits at least four fundamentally different representations. We shall restrict ourselves in discussing the security of these bridges, because the security of the CSGN scheme is not entirely understood. 

\subsection{Description of the CSGN scheme}

We give a brief description of the CSGN scheme. For more details regarding the parameter selection, we refer to \cite{BPP18}. The plaintext space is the field $\F_2$ and the ciphertext space of this scheme is $\F_2^n$ with the monoid structure defined by component-wise multiplication. A simplified version of the scheme is defined as follows.

\begin{itemize}
	\item $\mathrm{KeyGen}_{\rm CSGN}(1^{\lambda})$: Choose dimension parameters $n$, $d$ and $s$ of size $\mathrm{poly}(\lambda)$, a uniformly random subset $S$ of $\{1,2, \dots, n \}$ of size $s$, and a finite distribution $X$ on $\{1, 2, ..., d\}$ according to \cite{BPP18}. Set the secret key $sk$ to be the characteristic function of $S$, viewed as a bit vector.
	
	\item $\mathrm{Enc}_{\rm CSGN}$: To encrypt $0$, choose first $k \in \{1, 2, ..., d\}$ according to $X$ and then choose uniformly at random $d$ numbers  $i_1, \dots, i_d$ from the set  $\{1, 2, \dots, n \}$, such that exactly $k$ of them are in $S$. Finally, output the vector in $\F_2^n$ whose components corresponding to the indices $i_1$, $\dots$, $i_d$ are equal to $0$ and the others are equal to $1$. To encrypt $1$, choose uniformly at random $d$ numbers  $i_1, \dots, i_d$ from the set  $\{1, 2, \dots, n \}$, such that none of them is in $S$, and output the resulting vector in $\F_2^n$ as before.

	\item $\mathrm{Dec}_{\mathrm{CSGN}}$: To decrypt a ciphertext $c$ using the secret key $sk$, output $0$ if $c$ has at least one component equal to $0$ corresponding to an index from $S$ and $1$, otherwise.
\end{itemize}

\noindent The output of the decryption algorithm can be written as
$$\mathrm{Dec}_{\rm CSGN}(sk,c) = \prod_{i \in S} c_i.$$
Notice that, the decryption map is a homomorphism of monoids from $(\F_2^n, \cdot)$ to the monoid $(\F_2, \cdot)$ with the usual multiplication.

In what follows, we present four variants of bridges from the CSGN scheme, denoted by $\sS$, to various FHE schemes. The latter are going to be denoted by $\sH$. Also, the pairing $\langle \cdot , \cdot \rangle : R^n \times R^n \rightarrow R$ will always be the standard inner product over the ring $R$.

\subsection{$1^{\text{st}}$ bridge}
Let $\sH$ be any FHE scheme with plaintext space the field with two elements; hence, the map $\iota$ is the identity map.  
The secret key $sk_{\sS}$ can be represented by the $n$-dimensional standard vectors $e_i$, where $i \in S$. The bridge key generation algorithm encrypts each entry of the vectors $e_i$, $i \in S$ using $pk_{\sH}$ to obtain the bridge key $ bk = \{\widetilde{e_1},..., \widetilde{e_s}\}$, a set of vectors consisting of the aforementioned encryptions.

We remark that the decryption algorithm of $\sS$ may be written as
\begin{equation*}
\mathrm{Dec}_{\sS}(sk_{\sS},c) = \prod_{i \in S} \langle c,e_i \rangle,
\end{equation*}

\noindent so that the bridge algorithm $f$ is as follows:

\begin{equation*}
f(bk,c) = \prod_{i=1}^{s} \langle c,\widetilde{e_i} \rangle = \prod_{i=1}^{s} \left( \sum_{c[j] =1} \tilde{e_i}[j] \right).
\end{equation*}

\noindent For simplicity, we chose the trivial encryptions as the encryptions of the bits of $c$ with $\sH$.

\subsection{$2^{\text{nd}}$ bridge}
We are in the same setting as before, where both plaintext spaces are $\F_2$. Recall that the secret key $sk_{\sS}$ is the characteristic function of the set $S$, represented as an $n$-dimensional bit vector. Then, the decryption of $\sS$ can be alternatively written as
\begin{equation*}
\mathrm{Dec}_{\sS}(sk_{\sS},c) = \prod_{i = 1}^n \Big(1 - (1- c[i]) sk_{\sS}[i] \Big) = \prod_{c[i] = 0} (1 - sk_{\sS}[i])  .
\end{equation*} 

The bridge key $bk$ is constructed as $bk:=\{\widetilde{sk_{\sS}}[1], ..., \widetilde{sk_{\sS}}[n]\}$, where for every $i$, $\widetilde{sk_{\sS}}[i]$ is an encryption of $1-sk_{\sS}[i]$ under $pk_{\sH}$.
Finally, the bridge is given by
\begin{equation*} \label{2ndbringe}
f(bk,c) = \prod_{c[i] = 0} \widetilde{sk_{\sS}}[i].
\end{equation*}

\begin{remark}
The last formula shows that this bridge can be constructed even if the scheme $\sH$ is homomorphic only with respect to multiplication. For example, it can be used when $\sH = \sS$ obtaining something that resembles the key-switching technique in some FHE schemes.
\end{remark}

\subsection{$3^{\text{rd}}$ bridge}

Here, the scheme $\sH$ can be any FHE scheme with plaintext space the finite field $\mathbb{F}_p$, where $p$ is a prime (for example the BGV and B/FV schemes, see \cite{BGV12}, \cite{B12} and \cite{FV12}).

The bridge key generation algorithm instantiates $\mathrm{KeyGen}_{\sS}(1^\lambda)$ and then $\mathrm{KeyGen}_{\sH}(1^{\lambda})$, assuring that the characteristic of $\sP_{\sH}$ is larger than the Hamming weight of $sk_{\sS}$, that is $p> s$. It then chooses positive integers $x_1, ..., x_s$ such that $p= 1 + x_1+ \dots + x_s$, and fixes a bijection  $\varphi : S \rightarrow \{1, ..., s\}$. Consider the vector $sk \in \mathbb F_p^n$, where $sk[i]=0$ if $sk_{\sS}[i]=0$ and $sk[i]=x_{\varphi(i)}$, otherwise. For every $i \in \{1,\dots,n \}$, write $\widetilde{sk}[i]$ for an encryption of $sk[i]$ under $pk_{\sH}$. In this case, the bridge key $bk$ is the set of $\sH$ encryptions $bk=\{\widetilde{sk}[1], \dots, \widetilde{sk}[n] \}.$

We remark that if $\iota: \mathbb{F}_2 \hookrightarrow \mathbb{F}_p$ denotes the usual embedding, then the decryption of $\sS$ satisfies
\begin{equation*}
\mathrm{Dec}_{\sS}(sk_{\sS},c) = \iota^{-1} \Big( 1 - \big(1 + \<c, sk \rangle_{\mathbb{F}_p}\big)^{p-1}\Big).
\end{equation*} 

\noindent The bridge map is defined as
\begin{equation*}
f(bk,c) = \mathrm{Enc}_{\sH}(pk_{\sH},1) - \left( \mathrm{Enc}_{\sH}(pk_{\sH},1) + \sum_{c[i]=1} \widetilde{sk}[i] \right)^{p-1},
\end{equation*}
where the additions, subtractions and exponentiation on the right hand side are homomorphic operations on the ciphertexts of $\sH$.

\begin{remark}
As mentioned in the discussion following Definition \ref{bridge}, one can develop a theory of bridges for which the plaintext spaces of the two encryption schemes vary with $\lambda$ along the same lines. The bridge constructed here falls in this category because the plaintext space of $\sH$ is chosen after the size of the secret key is selected, as part of the Setup/KeyGen algorithm. 
\end{remark}

\subsection{$4^{\text{th}}$ bridge}
This bridge is based on an idea used  in \cite{ASP14} for the bootstrapping procedure of the GSW scheme. 
 Notice that if $c$ is a ciphertext in $\sS$, encrypted using $pk_{\sS}$, then $c$ decrypts to $1$ if and only if the inner product $\langle c, sk_{\sS} \rangle_{\mathbb Z} = s$, namely
\begin{equation*}
\mathrm{Dec}_{\sS}(sk_{\sS},c) = [\<c, sk_{\sS} \rangle_{\mathbb Z}=s],
\end{equation*}

\noindent where $[x=y]$ is, as before, the equality test.

We observe that in the computation of the inner product $\<c, sk_{\sS}\rangle_{\mathbb Z}$ one uses only the additive structure of $\mathbb{Z}$ (also $\mathbb{Z}_m$ with $m > s$ would be sufficient for our purposes).
To find a representation of the cyclic group  $(\mathbb{Z}_m, +)$, one needs first to embed it into the symmetric group $\mathfrak{S}_m$. 
The generator $1 \in \mathbb{Z}_m$ is sent by this injective homomorphism to the cyclic permutation $\pi_1 \in \mathfrak{S}_m$, defined as $\pi_1(i)=i+1$ for $1 \leq i < m$ and $\pi_1(m)=1$. On the other hand, the group $\mathfrak{S}_m$ is isomorphic to the multiplicative group of $m$-by-$m$ permutation matrices, that is matrices with 0 or 1 entries, having exactly one nonzero element in each row and each column. The isomorphism maps the permutation $\pi \in \mathfrak{S}_m$ to the matrix $M_{\pi} = [e_{\pi(1)}, ..., e_{\pi(m)}]$, where $e_i \in \{0, 1\}^m$ is the $i^{\text{th}}$ standard basis vector. The composition of these two homomorphisms gives us an embedding for the cyclic group  $(\mathbb{Z}_m, +)$. For implementation purposes, it is good to notice that the permutation matrices in the image of this embedding can be represented more compactly by just their first column, because the remaining columns are just the successive cyclic shifts of this column. 

Let us explain how the bridge is constructed. Let $m=s+1$ and take $sk = (sk[1], ..., sk[n])$ to be the aforementioned representation of the secret key $sk_{\sS}$, that is $sk[i] = M_{\pi_1}$ if $sk_{\sS}=1$ and $sk[i]$ is the identity matrix otherwise. 
Set $\widetilde{sk}[i]$ to be an encryption of $sk[i]$ under $pk_{\sH}$ for all $i \in \overline{1, n}$, meaning that we encrypt with $\sH$ each entry of the matrix $sk[i]$.
The bridge key $bk$ consists of $\{ \widetilde{sk}[1], \dots, \widetilde{sk}[n]\}$.

The algorithm $f$ takes as input $bk$ and $c$ and computes the matrix
\begin{equation*}
 P^c := \prod_{c[i] = 1} \widetilde{sk}[i],
\end{equation*}
where the right hand side is a product of encrypted matrices, performed homomorphically in $\sC_{\sH}$. We remark that the last entry of the first row of $P^c$ is an encryption of the value returned by the equality test $[\<c, sk_{\sS} \rangle_{\mathbb Z}=s]$. Consequently, we let the output of the bridge map be
\begin{equation*}
f(bk,c) := P^c_{1,s+1}.
\end{equation*}

\section{Entangled encryption schemes}

%TODO

Informally, we say that two encryption schemes $\sS$ and $\sH$ are {\it entangled} if there is a bridge with empty bridge key from one to another.

In this appendix we give an example of such  a bridge. In  this example, the secret key of $\sS$ and $\sH$ are identical.

We believe that whenever two encryption schemes $\sS$ and $\sH$ are entangled, there is a relation between the ensembles of distributions of their secret keys. We regard this as an interesting question for future research.

The presented bridge does not follow Gentry's recipe. We start by recalling the Goldwasser-Micali and Sander-Young-Yung encryption schemes. A bridge from the former to the latter is then presented. The presentation is followed by an interesting application of this bridge.

\subsection{Goldwasser-Micali Cryptosystem}

The Goldwasser-Micali encryption scheme  is an asymmetric key encryption algorithm developed by Shafi Goldwasser and Silvio Micali in \cite{GM84}. If $p,q$ are two primes and $N = p \cdot q$, then let
$J_1(N):=\{ x \in (\mathbb Z/ N \mathbb Z)^\times | \left( \frac{x}{N} \right) = 1 \}$ be the multiplicative group of invertible integers modulo $N$ with Jacobi symbol equal to $1$.
The GM-encryption scheme $(\mathbb{Z}/2\mathbb{Z}, J_1(N), {\rm KeyGen}_{GM}, {\rm Enc}_{GM}, {\rm Dec}_{GM})$ is given as follows:

\begin{itemize}
\item {\bf KeyGen}($1^{\lambda}$): Choose two primes $p=p(\lambda), q=q(\lambda)$ of size $\lambda$ and let $N=pq$. Choose $\eta \in  (\mathbb{Z}/N\mathbb{Z})^\times$ such that $\left( \frac{\eta}{p} \right) = \left( \frac{\eta}{q} \right) = -1$, which yields that $\eta \in J_1(N)$.
The public key is the pair $(N, \gamma:=\eta\cdot u^ 2),$ where $u$ is a random element of $(\mathbb{Z}/ N \mathbb{Z})^\times$.
The secret key is the pair $(p,q)$.
\item {\bf Enc}: To encrypt $m \in \mathbb{Z}/2 \mathbb{Z}$, choose a random $\xi \in \mathbb{Z}/ N \mathbb{Z}$ and let ${\rm Enc}_{GM}(m)= \gamma^m \xi^2$.
\item {\bf Dec}: To decrypt $c \in J_1(N)$, compute the Jacobi symbol $\left( \frac{c}{p} \right)$. Set ${\rm Dec}_{GM}(c)=0$ if the answer is $1$ and ${\rm Dec}_{GM}(c) = 1$ if the answer is $-1$.
\end{itemize}

\noindent The GM-encryption scheme is homomorphic with respect to addition in $\mathbb{Z}/2\mathbb{Z}$ and multiplication in $J_1(N)$, i.e. 
\begin{equation*}
{\rm Dec}_{GM}(c_1 \cdot c_2) = {\rm Dec}_{GM}(c_1) + {\rm Dec}_{GM}(c_2)
\end{equation*}
\noindent for all $c_1, c_2 \in J_1(N)$.

\subsection{The Sander-Young-Yung Cryptosystem} 

In this part of the appendix we present a homomorphic encryption scheme over the multiplicative monoid $(\mathbb{Z} / 2 \mathbb{Z}, \cdot)$ introduced in \cite{SYY}. To describe the scheme we shall use the encryption scheme of Goldwasser-Micali, which was recalled above.

\begin{itemize}
\item {\bf Keygen}($1^{\lambda}$): Choose two primes $p=p(\lambda)$, $q=q(\lambda)$ as in the Goldwasser-Micali scheme. Choose $\ell = \ell(\lambda)$ of size $\Theta(\lambda)$. Compute $N=pq$. The public key and secret keys are the same as in the Goldwasser-Micali scheme. 
\item {\bf Enc}: If $m=1$ set $v = (0, ..., 0) \in \{0, 1\}^{\ell}$. If $m=0$ set $v = (v_1,...,v_n) \in  \{0, 1\}^{\ell}$, where the
components $v_i$ are randomly chosen in $\{0, 1\}$, not all equal to $0$. Encrypt each component of $v$ with the Goldwasser-Micali scheme to get a vector in $\mathscr{C}_{SYY}:= J_1(N)^{\ell}$.
\item {\bf Dec}: To recover the plaintext from the ciphertext $c \in \mathscr{C}$, first decrypt each component of $c$ using the decryption algorithm of the Goldwasser-Micali scheme, and then if the obtained vector is the $0$-vector the message decrypts to 1, else to 0.
\end{itemize}

\noindent Let us describe an operation $\odot$ on the ciphertext space $\mathscr{C}_{SYY}$. If $x$ and $y$ are two ciphertexts then $z:= x \odot y$ is defined as follows:

\medskip

\noindent 1. Choose uniformly at random two $\ell \times \ell$ matrices over $\mathbb{Z}/2 \mathbb{Z}$ until two nonsingular
matrices $A=(a_{ij})$ and $B=(b_{ij})$ are found.

\medskip

\noindent 2. If $x=(x_1, ..., x_{\ell})$, $y=(y_1, ..., y_{\ell})$, then compute

$$z_i = \displaystyle{\prod_{j, a_{ij}=1} x_j \cdot \prod_{j, b_{ij}=1} y_j}$$

\noindent for all $i$. 

\medskip

\noindent 3. Pick uniformly at random $r_1, ..., r_{\ell} \in (\mathbb{Z}/ N \mathbb{Z})^\times$ and
set $z= (z_1 r_1^2,..., z_{\ell} r_{\ell}^2)$.

\medskip

\noindent Let us denote by $v_c$ the bit vector obtained by applying the decryption algorithm of the Goldwasser-Micali scheme componentwise to the ciphertext $c \in\mathscr{C}$. If $z:= x \odot y$ then Step 2 above is equivalent to:

\begin{equation*}
v_z= A v_x + B v_y,
\end{equation*}

\noindent where the operations are the usual addition and multiplication in $\mathbb{Z}/ 2\mathbb{Z}$. Notice that ${\rm Dec}_{SYY}(z) \neq {\rm Dec}_{SYY}(x) \cdot {\rm Dec}_{SYY}(y)$ if and only if $A v_x + B v_y = \vec 0$ (here $\vec 0$ is the zero vector in $(\mathbb{Z} / 2\mathbb{Z})^{\ell}$), and $v_x \neq \vec 0$, $v_y \neq \vec 0$.
Since $v_x \neq \vec 0$ and $A$ is nonsingular, the product $A v_x$ can be any nonzero vector in $(\mathbb{Z} / 2\mathbb{Z})^{\ell}$, and in fact any such vector occurs with the same probability. Of course, the same is true for $B v_y$ such that the situation described above occurs with probability $\leq \dfrac{1}{2^{\ell}}$. In other words, except with exponentially small probability, we have that

\begin{equation*}
 {\rm Dec}_{SYY} (x \odot y) = {\rm Dec}_{SYY}(x) \cdot {\rm Dec}_{SYY}(y).
\end{equation*}

\subsection{A bridge from GM to SYY}

Here, we construct a bridge from the Goldwasser-Micali encryption scheme to the Sander-Young-Yung encryption scheme. 
After generating a secret key $(p, q)$ of GM, the key generation algorithm of the bridge sets the same pair $(p,q)$ as the secret key for the SYY encryption scheme. Then, the public keys for the two encryption schemes are generated independently using their respective key generation algorithms. After that, the bridge key generation algorithm does not output anything, i.e. the support of the distribution $BK$ is the empty set. 

Now, for $c \in J_1(N)$, choose uniformly at random a non-singular matrix $A \in {\rm GL}_{\ell} (\mathbb{Z} / 2\mathbb{Z})$ and
compute

\begin{equation*}
t_i =  \prod_{j, a_{ij}=1} c \gamma' = \left( c \gamma' \right)^{|\{j | a_{ij}=1 \}|}
\end{equation*}
for all $i \in \overline{1, \ell}$, where $\gamma'$ is the second component of the public key of the SYY scheme. Pick uniformly at random $r_1, \dots, r_{\ell} \in (\mathbb{Z}/ N \mathbb{Z})^{\times}$ and
set

\begin{equation*}
f(c)  =  (t_1 r_1^2, ..., t_{\ell} r_{\ell}^2).
\end{equation*}
 
\noindent If ${\rm Dec}_{GM}(c) = 1$, then ${\rm Dec}_{GM}(c \gamma') = 0$ so that ${\rm Dec}_{GM}(t_i) = 0$, $\forall i$. Therefore, $v_{f(c)} = \vec 0$ and hence ${\rm Dec}_{SYY} (f(c)) = 1$. On the other hand, if ${\rm Dec}_{GM}(c) = 0$, then ${\rm Dec}_{GM}(c \gamma') = 1$, and since $A$ is nonsingular there exist $i \in \overline{1, \ell}$ such that ${\rm Dec}_{GM}(t_i) = 1$. We get that $v_{f(c)} \neq \vec{0}$, equivalently 
${\rm Dec}_{SYY} (f(c)) = 0$.

\begin{remark} \label{securityGM}
The security of this bridge reduces to the security of the GM scheme (see \cite{GM84}) using Theorem \ref{mt}. Indeed, the bridge key distribution is empty, thus trivially polynomial-time constructible on fibers. On the other hand, the security of SYY encryption scheme can be easily reduced to the security of GM (see \cite{SYY}). Alternatively, one can use Theorem \ref{knowledge:thm} instead of \ref{mt}. To see this, note that in the notation of Section 3, the public key of the scheme attached to this bridge $PK_{\sG_f}$ consists of just GM's public key and the security of $GM[PK_{\sG_f}]$ is equivalent to the security of GM.
\end{remark}

\subsection{An application}

As an application of the above bridge we show that the comparison circuit can be evaluated homomorphically. For this, let $\vec{x}=(x_1, x_2, ..., x_n)$ and $\vec{y}=(y_1, y_2, ..., y_n)$ be two bit vectors. The two vectors coincide if and only if

\begin{equation*}
(x_1 + y_1 + 1) \cdot ... \cdot (x_n + y_n + 1) = 1,
\end{equation*} 

\noindent so that the comparison circuit $[\vec{x}=\vec{y}]$ is defined by 
$$
[\vec{x}=\vec{y}]:=(x_1 + y_1 + 1) \cdot ... \cdot (x_n + y_n + 1).
$$

\noindent Suppose now that $\vec{c} = (c_1, ..., c_n)$ and $\vec{d} = (d_1, ..., d_n)$ are encryptions of the vectors $\vec{x}$, $\vec{y}$ with the Goldwasser-Micali cryptosystem. To homomorphically evaluate the comparison circuit, we compute:
\begin{equation*}
{\rm Eval}([\vec{x}=\vec{y}], \vec{c},\vec{d} \;):=  \bigg( \Big( \big( f(c_1 \cdot d_1 \cdot \gamma) \odot f(c_2 \cdot d_2 \cdot \gamma) \big) \odot  ... \Big) \odot f (c_n \cdot d_n \cdot \gamma) \bigg).
\end{equation*}
\noindent Notice that ${\rm Dec}_{SYY}\left( {\rm Eval}([\vec{x}=\vec{y}], \vec{c}, \vec{d} \;) \right) = [\vec{x}=\vec{y}]$, except with negligible probability in the security parameter.

\medskip

We end this appendix with the following reflection. When two encryption schemes admit the construction of a bridge which has an empty bridge key, this may be interpreted as some sort of entanglement between the schemes. Along the same line of thought, if one can prove that such a bridge cannot be constructed, the encryption schemes may be regarded as being independent.

\vskip-6pt
\section{Experiments}
\vskip-6pt
We conducted experiments for the bridges described in Appendices A and B. For each of the four different bridges in Appendix A, we compare the results of the homomorphic evaluation of a circuit consisting of only one monomial in the following two ways. First, we encrypt each factor of the monomial and perform the homomorphic multiplications of these factors using the $\mathrm{CSGN}$ scheme. Then, bridges described in Appendix A are applied, in turn, to obtain a ciphertext in a fully (leveled) homomorphic encryption scheme based on (R)LWE. We compare this to the alternative option of evaluating the monomial directly on encryptions in the FHE scheme. If the degree of the monomial is larger than a certain threshold, the first procedure outperforms the second in terms of speed. We identified this threshold for each of the FHE schemes in which we performed experiments.

These computations were carried on a virtual machine having an Intel CPU (I7-4770, 4 cores, 3.4 GHz, 12 GB RAM), using a single threaded implementation. Table 1 consists of an overview of the processing times for each bridge using the implementations of BGV, BFV and TFHE schemes, namely the HElib \cite{HELIB}, SEAL \cite{SEAL} and TFHE \cite{CGGI19} software libraries. In the first two columns of the table, one can find the version of the bridge that was implemented, the FHE target scheme and the security parameters for the two schemes. The timings are measured such that all encryptions maintain approximately the same security level $\lambda$ and listed in the last two columns. The small variation in $\lambda$ is due to parameter tuning in the different software libraries.

\begin{table}[ht]
\caption{Bridge evaluation \label{tab:comp}}
\centering{}
\begin{tabular}{|c|c|c|c|}
\hline
{\textbf{Bridge (CSGN-$\lambda$) }}  & $\textbf{LWE ($\lambda$) }$ & $\textbf{ENC (Bridge key) }$ & $\textbf{Bridge time}$ \tabularnewline
\hline
{$1^{\text{st}}$(125)}  & $\mathrm{BGV}(121)$ & $69 \ sec$  & $2.6 \ sec$   \tabularnewline
\hline
{$1^{\text{st}}$(125)}  & $\mathrm{TFHE}(128)$ & $186 \ ms$  & $38.33 \ sec$ \tabularnewline
\hline

{$1^{\text{st}}$(125)}  & $\mathrm{BFV}(128)$ & $38.97 \ sec$  & $209.95 \ ms$ \tabularnewline
\hline

{$2^{\text{nd}}$(125)}  & $\mathrm{BGV}(114)$ & $14.6 \ sec$  & $68.28 \ sec$  \tabularnewline
\hline
{$2^{\text{nd}}$(125)}  & $\mathrm{TFHE}(128)$ & $2.94 \ ms$  & $1049 \ ms$  \tabularnewline
\hline

{$2^{\text{nd}}$(125)}  & $\mathrm{BFV}(128)$ & $698 \ ms$  & $2.24 \ sec$ \tabularnewline
\hline

{$3^{\text{rd}}$(120)}  & $\mathrm{BGV}(145)$ & $7.65  \ sec$  & $248 \ ms$  \tabularnewline
\hline

{$3^{\text{rd}}$(120)}  & $\mathrm{BFV}(128)$ & $8.2 \ sec$  & $156.46 \ ms$ \tabularnewline
\hline

{$4^{\text{th}}$(115)}  & $\mathrm{TFHE}(128)$ & $162.6\ ms$  & $989.4 \ sec$  \tabularnewline
\hline
\end{tabular}
\end{table}

The reason we are missing an implementation for our third bridge using the TFHE library comes from the lack of flexibility in choosing as plaintext space a ring of characteristic $p>2$ in this library. Additionally, we felt that adapting the TFHE library was beyond the scope of our work.  Also, the timing for running the fourth bridge in BGV and BFV could not be measured because of large memory usage, which exceeded the virtual machine RAM. Moreover, regarding the fourth bridge, the implementation is optimized to store only the first column of each associated bit in the secret key, while the matrix multiplications involve only homomorphic algebraic operations on encryptions from the first column of the matrices.

There is no doubt that homomorphically evaluating a circuit whose polynomial representation has a large number of monomials of low degree using the bridge is inefficient and there is little hope for optimizations in terms of speed. However, if some monomials have large degree, one might choose to do so, because first performing multiplications in the CSGN scheme, followed by additions in the (R)LWE setting might result in lower noise growth. Moreover, by increasing the multiplicative depth of the circuit, we observe that its evaluation is faster using the bridge than evaluating the circuit entirely in the (R)LWE schemes. This can be observed in the figures below.

Since the multiplication in the CSGN scheme is inexpensive, the evaluation time in the bridge using BGV, BFV  and TFHE is almost constant as it essentially consists only of the evaluation time of the bridge algorithm for one CSGN ciphertext. Small variations in execution time for the bridge are due to the CPU scheduling process. The drops in evaluation times occur when the instruction-specific and data-specific cache at different levels in the CPU is filled with numerous repetitive instructions. The timings for evaluating the circuit entirely in the BGV or BFV scheme grow linearly with the degree of the monomial. We notice that in the TFHE case, the running time of the evaluation starts growing exponentially in the number of multiplications, at some point. This is explained by the fact that the TFHE software library goes automatically into bootstrapping, whereas in the HElib and SEAL  software libraries we can choose parameters in which one can evaluate the circuit without the costly bootstrapping procedure.
\begin{center}
\begin{figure}[ht]
\begin{center}
	\includegraphics[scale=0.30]{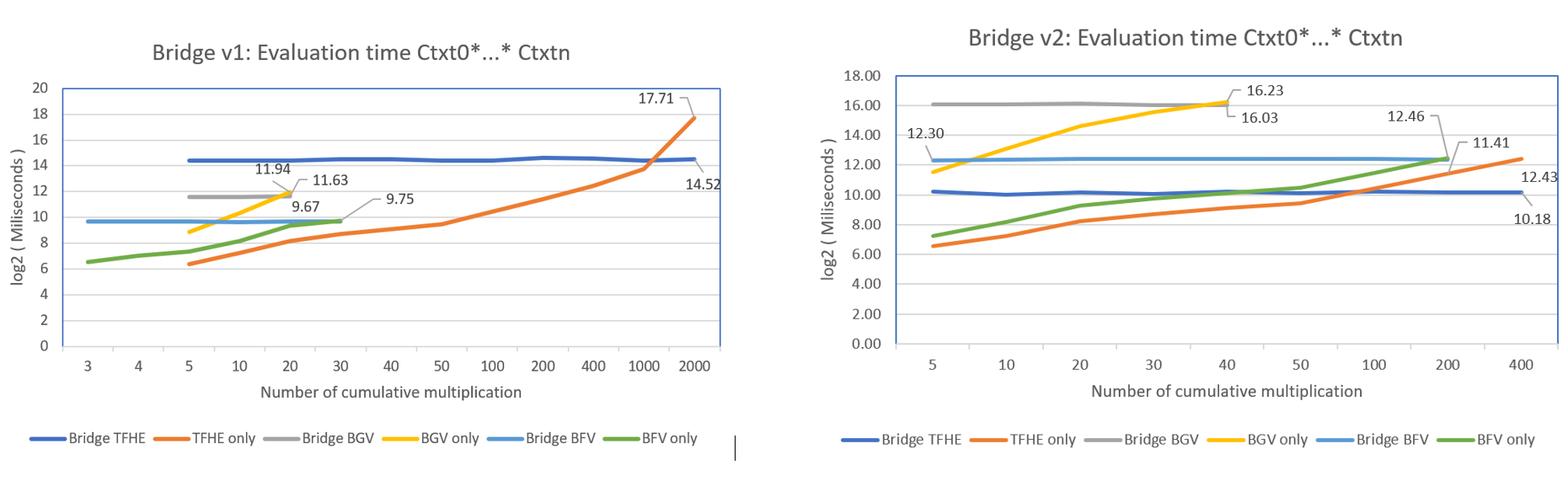}
\end{center}
	\caption{The first and second bridges}
\end{figure}
\end{center}

\begin{center}
	\begin{figure}[ht]
	\label{b3_bgv_bfv}
	\begin{center}
	\includegraphics[scale=0.45]{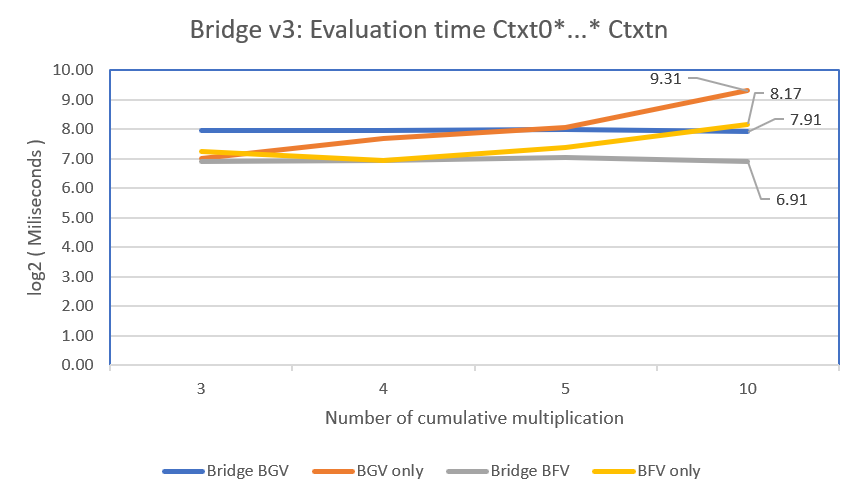}
	\end{center}
	\caption{The third bridge - BGV \& BFV}
	\end{figure}
\end{center}
We now report on the implementation of the bridge from the Goldwasser-Micali encryption scheme to the Sander-Young-Yung encryption scheme constructed in the Appendix B. In the table below, one can find the timings required for running the bridge, as well as the ones needed for the homomorphic evaluation of the comparison circuit. The measurements were performed on an Intel I7-1068NG7 CPU laptop with 32GB of RAM. Since the parameter $\ell$ of the SYY scheme does not have an impact on the security, but rather on the probability to correctly decrypt the ciphertext  $\left(\geq 1- \dfrac{1}{2^{\ell}}\right)$, we fix $\ell$ to be 50.

\begin{table}
\centering{}
\caption{Homomorphic evaluation of comparison circuit using GM-SYY bridge \label{tab:comp2}}
\begin{tabular}{|c|c|c|c|c|c|}
\hline
 \;\;\; $n$ \;\;\;  & \;\; $ \log_2(N)$ \;\; & \; \; \; $\textbf{GM $\cdot$}$ \; \; \; & \; \; $\textbf{SYY  $\odot$}$  \; \; & $\; \; \textbf{GM $\rightarrow$ SYY}$ \; \; &\textbf{$[\vec{x} = \vec{y}]$} \tabularnewline
\hline
 4 & 1024 & $0.002 \ ms$  & $10.02 \ ms$ & $4.54 \ ms$ & $58.35 \ ms$   \tabularnewline
\hline
 4 & 2048 & $0.003 \ ms$  & $29.96 \ ms$ & $11.64 \ ms$ & $164.44 \ ms$   \tabularnewline
\hline
 4 & 4096 & $0.008 \ ms$  & $84.01 \ ms$ & $32.82 \ ms$ & $467.43 \ ms$   \tabularnewline
\hline
 8 & 1024 & $0.003 \ ms$  & $10.70 \ ms$ & $4.77 \ ms$ & $123.98 \ ms$   \tabularnewline
\hline
 8 & 2048 & $0.004 \ ms$  & $30.12 \ ms$ & $11.89 \ ms$ & $336.27 \ ms$   \tabularnewline
\hline
 8 & 4096 & $0.008 \ ms$  & $84.65 \ ms$ & $33.46 \ ms$ & $945.26 \ ms$   \tabularnewline
\hline
 16 & 1024 & $0.002 \ ms$  & $10.8 \ ms$ & $4.87 \ ms$ & $251.44 \ ms$   \tabularnewline
\hline
 16 & 2048 & $0.004 \ ms$  & $29.69 \ ms$ & $11.55 \ ms$ & $660.17 \ ms$   \tabularnewline
\hline
16 & 4096 & $0.008 \ ms$  & $85.49 \ ms$ & $33.71 \ ms$ & $1907.78 \ ms$   \tabularnewline
\hline
32 & 1024 & $0.003 \ ms$  & $10.4 \ ms$ & $4.69 \ ms$ & $484.10 \ ms$   \tabularnewline
\hline
32 & 2048 & $0.004 \ ms$  & $30.29 \ ms$ & $11.82 \ ms$ & $1348.44 \ ms$   \tabularnewline
\hline
32 & 4096 & $0.009 \ ms$  & $82.51 \ ms$ & $32.34 \ ms$ & $3576.41 \ ms$   \tabularnewline
\hline
\end{tabular}
\medskip
\end{table}
The parameters $n$ and $N$ in Table \ref{tab:comp2} stand for the bit-lengths of $\vec{x}, \vec{y}$ and, respectively, the Goldwaser-Micalli modulus. The timings required for the one homomorphic operation in each scheme can be found in the third and the fourth columns. We notice that the timings presented above grow linearly with the number of bits required to represent the input data. This can be observed in the following figure. 
\begin{center}
	\begin{figure}[ht]
	\label{gm_syy_timings}
	\begin{center}
	\includegraphics[scale=0.3]{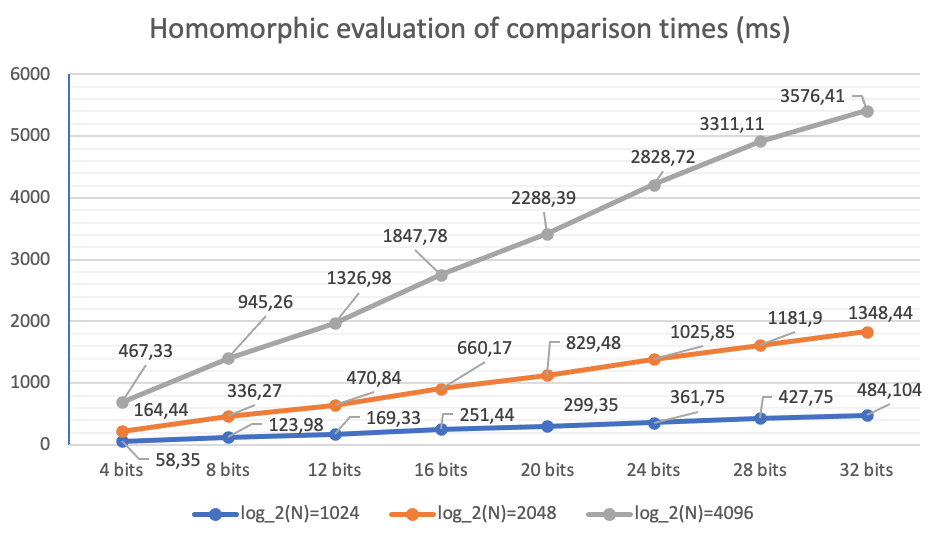}
	\end{center}
	\caption{Evaluation times for the comparison circuit using GM-SYY bridge}
	\end{figure}
\end{center}

%% The Appendices part is started with the command \appendix;
%% appendix sections are then done as normal sections
%% \appendix

%% \section{}
%% \label{}

%% References
%%
%% Following citation commands can be used in the body text:
%% Usage of \cite is as follows:
%%   \cite{key}         ==>>  [#]
%%   \cite[chap. 2]{key} ==>> [#, chap. 2]
%%

%% Authors are advised to use a BibTeX database file for their reference list.
%% The provided style file elsarticle-num.bst formats references in the required Procedia style

%% For references without a BibTeX database:

% \begin{thebibliography}{00}

%% \bibitem must have the following form:
%%   \bibitem{key}...
%%

% \bibitem{}

% \end{thebibliography}

\end{document}